\documentclass[a4paper,11pt,reqno]{amsart}

\usepackage[margin=1in,includehead,includefoot]{geometry}
\usepackage[utf8]{inputenc}
\usepackage[T1]{fontenc}
\usepackage{lmodern,microtype}
\usepackage{hyperref}
\usepackage{booktabs}
\usepackage{mathtools,amssymb,amsthm,amsmath}
\usepackage[foot]{amsaddr}
\usepackage{mathrsfs}
\usepackage{graphicx}
\usepackage[table]{xcolor}
\usepackage{wrapfig}
\usepackage[margin=5mm]{caption}
\usepackage{enumitem}
\usepackage[textsize=footnotesize,disable]{todonotes}
\usepackage[notref,notcite,final]{showkeys}
\usepackage{mdframed}
\usepackage{xpatch}
\usepackage{pgfplots}
\usepackage{mycommands}

\graphicspath{{./graphics/}}

\makeatletter
\let\old@setaddresses\@setaddresses
\def\@setaddresses{\bigskip{\parindent 0pt\let\scshape\relax\let\ttfamily\relax\old@setaddresses}}
\makeatother

\newtheorem{theorem}{Theorem}

\newtheorem{lemma}[theorem]{Lemma}
\theoremstyle{remark}

\hypersetup{
  pdftitle={A simple algorithm for computing Hamilton paths on independent set polytopes},
  pdfauthor={Jean Cardinal, Pia Herkenrath, Torsten M\"utze and Francesco Verciani}
}

\begin{document}

\title{A simple algorithm for computing Hamilton paths on independent set polytopes}

\author{Jean Cardinal}
\address[Jean Cardinal]{Computer Science Department, Universit\'e Libre de Bruxelles, Belgium}
\email{jean.cardinal@ulb.be}

\author{Pia Herkenrath}
\address[Pia Herkenrath]{Institut f\"ur Mathematik, Universit\"at Kassel, Germany}
\email{pherkenrath@mathematik.uni-kassel.de}

\author{Torsten M\"utze}
\address[Torsten M\"utze]{Institut f\"ur Mathematik, Universit\"t Kassel, Germany}
\email{tmuetze@mathematik.uni-kassel.de}

\author{Francesco Verciani}
\address[Francesco Verciani]{Institut f\"ur Mathematik, Universit\"at Kassel, Germany}
\email{francesco.verciani@mathematik.uni-kassel.de}

\thanks{This project was supported by German Science Foundation grant~522790373.}

\maketitle
\begin{abstract}
The independent set polytope, or stable set polytope, of a graph~$G$ is the 0/1-polytope defined by the convex hull of the characteristic vectors of all independent sets of~$G$.
We present a simple algorithm for computing a Hamilton path on the independent set polytope of a given $n$-vertex graph~$G$ with amortized delay~$\cO(n)$.
The independent sets are listed such that two consecutive sets differ either in removing a vertex, or adding a vertex and removing its neighbors from the independent set, i.e., the symmetric difference between two consecutive independent sets induces a star in~$G$.
As applications of this result, we obtain an algorithm to compute a Hamilton path on the matching polytope of an $m$-edge graph~$G$ with worst-case delay~$\cO(m)$, which lists all matchings of~$G$ in such a way that the symmetric difference between two consecutive matchings is a path on at most three edges.
Furthermore, we obtain an algorithm to compute a Hamilton path on the chain polytope and order polytope of an $n$-element poset~$P$ with amortized delay~$\cO(n)$, which lists all antichains of~$P$ or all ideals of~$P$, respectively, by star exchanges.
Our algorithms are derived from the generic framework proposed by Merino and M\"utze (FOCS'23+SICOMP'24) for computing Hamilton paths on arbitrary 0/1-polytopes, which uses a linear optimization procedure as a black box.
Our algorithms bypass solving the computationally intractable maximum weight independent set problem by a simple and purely combinatorial greedy rule.
\end{abstract}

\section{Introduction}

Polytopes with vertex coordinates in $\{0,1\}$, called \defi{0/1-polytopes}, are ubiquitous in combinatorial optimization.
This is because the solutions of many interesting optimization problems can be described as 0/1-vectors, by considering the characteristic vectors of the underlying sets.
For example, consider a graph~$G$ with $m$ edges labeled $1,\ldots,m$, and let~$M$ be a matching in~$G$, i.e., a set of edges, no two of which have a vertex in common.
Then the \defi{characteristic vector} of~$M$, denoted~$x^M\in\{0,1\}^m$ is defined as $x^M_e=1$ if $e\in M$ and $x^M_e=0$ if $e\notin M$.
The \defi{matching polytope} of~$G$ is defined as~$\scM(G)\coloneq\conv(\{x^M\mid M \text{ matching in }G\})\subset\mathbb{R}^m$ \cite{MR371732}.
The maximum weight matching problem is to maximize the linear objective function $w\cdot x$ for all $x\in\scM(G)$, where $w\in \mathbb{R}^m$ is the vector of weights assigned to the edges of~$G$.
This problem has been a cornerstone of computer science research in the past decades, and can be solved efficiently with one of the many algorithms that are now taught in textbooks.

\begin{figure}[b!]
\includegraphics[page=2]{ind}
\caption{(a) Independent set polytope of the 5-vertex path, with vertices numbered~$1,\ldots,5$ from left to right.
Characteristic vectors of independent sets of the path are bitstrings of length~5 that avoid two adjacent 1s.
Edges of the polytope corresponding to star exchanges are solid, the others dashed.
The shading highlights the two faces of the polytope in which the last bit is either~0 or~1.
(b)~Hamilton path computed by Algorithm~I on the independent set polytope from the left, with initial vertex~$\tx=00000$.
The edges are colored according to the length of the prefix change, and the algorithm changes the shortest possible prefix that leads to a new vertex.
(c)~Vertices listed in the order of the Hamilton path, with 0-bits and 1-bits visualized as white and black squares, respectively, revealing the genlex property.
}
\label{fig:ind}
\end{figure}

In this work, we focus on the independent set polytope of a graph~$G$, also known as stable set polytope~\cite{MR371732}.
Specifically, an \defi{independent set} in a graph~$G$ is a set of vertices that spans no edges, and the \defi{independent set polytope} is defined as $\scP(G)\coloneq \conv(\{x^U\mid U\text{ independent set in }G\})$; see Figure~\ref{fig:ind}~(a).
We note that matchings of~$G$ can be seen as independent sets in the \defi{line graph}~$L(G)$, the graph that has a vertex for every edge of~$G$, and an edge between vertices corresponding to edges of~$G$ that are incident to each other.
Consequently, the matching polytope of~$G$ equals the independent set polytope of~$L(G)$.
For example, the independent set polytope of the 5-vertex path shown in Figure~\ref{fig:ind} equals the matching polytope of the 6-vertex (=5-edge) path.
Unlike the aforementioned maximum weight matching problem, which is polynomial-time solvable, the corresponding optimization problem on the independent set polytope is NP-hard, as it includes the maximum cardinality independent set problem on~$G$ as a special case (all weights equal to~1).

Combinatorial 0/1-polytopes as the ones discussed before encode interesting information about the underlying combinatorial objects.
For example, the edges of the matching polytope connect pairs of matchings whose symmetric difference is an alternating path or cycle.
Similarly, the edges of the independent set polytope connect pairs of independent sets whose symmetric difference induces a connected subgraph.
The structure of these polytopes has also been studied extensively with the goal of understanding the aforementioned apparent dichotomy between tractability and intractability of the corresponding optimization problems.

\subsection{Hamilton paths on 0/1-polytopes}

In this work, we consider the problem of computing a Hamilton path on the independent set polytope of a given graph~$G$.
The goal is to visit every vertex of the polytope exactly once, such that any two consecutive vertices are connected by an edge.
Combinatorially, we aim to compute every independent set of~$G$ exactly once, such that the symmetric difference between any two consecutive independent sets induces a connected subgraph of~$G$.
One crucial property of such a listing algorithm is its \defi{delay}, which is the time needed to compute the next vertex of the polytope, that is, the next independent set of~$G$.
We note that listings of combinatorial objects subject to some minimum-change condition are also known as \defi{Gray codes} in the literature~\cite{MR1491049,MR4649606}.

By a classical result of Naddef and Pulleyblank~\cite{MR762893}, every 0/1-polytope admits a Hamilton path.
Recently, Merino and M\"utze developed a generic algorithm for computing a Hamilton path on any 0/1-polytope~\cite{MR4720318,MR4795009}, which was subsequently improved by Fink, Hlad\'ik, Merino, Mi{\v{c}}ka and~M\"utze~\cite{fink-et-al:2026}.
Their algorithm uses as a black box an algorithm for solving the linear optimization problem over the corresponding polytope, and the resulting delay is given by the time to solve the linear optimization problem.
As mentioned before, linear optimization over the independent set polytope is hard, so this approach does not directly give an efficient algorithm for computing a Hamilton path on the independent set polytope.

\subsection{Our results}
\label{sec:results}

Our main contribution is a simple algorithm for computing a Hamilton path on the independent set polytope~$\scP(G)$ of a given graph~$G$; see Figure~\ref{fig:ind}~(b)+(c).
The algorithm comes in a basic variant and an improved variant for faster speed (see Algorithms~I and~I\sss{} and Theorems~\ref{thm:algo-basic} and~\ref{thm:algo-faster} below).
The algorithm has a number of desirable features:
\begin{itemize}[itemsep=0ex,parsep=0ex,leftmargin=0ex,itemindent=0ex,labelsep=1ex,labelwidth=-2ex]
\item It works for every graph~$G$, and for every labeling of the vertex set of~$G$ with integers~$1,\ldots,n$, which gives additional freedom that can be exploited in some applications.
\item It works for every starting vertex of the polytope, i.e., for every initial independent set.
\item The algorithm is purely combinatorial: it does not perform any polyhedral or numerical computations, and also does not rely on solving the corresponding (hard) optimization problem, and it can be implemented with few lines of code.
\item The resulting amortized delay is linear in the number of vertices of~$G$.
\item To move from one independent set to the next, the algorithm applies one of two operations, referred to as a \defi{star exchange}: either remove a vertex of~$G$, or add a vertex of~$G$ and remove all its neighbors from the current independent set.
Consequently, the subgraph of~$G$ induced by the symmetric difference between two consecutive independent sets is always a star.
This also yields an algorithmic proof that the subgraph of the independent set polytope corresponding to star exchanges admits a Hamilton path.
\item The computed ordering of independent sets has the so-called genlex property, which means that characteristic vectors with the same suffix appear consecutively.
\end{itemize}

We also present the following applications of our algorithm.

\begin{itemize}[itemsep=0ex,parsep=0ex,leftmargin=0ex,itemindent=0ex,labelsep=1ex,labelwidth=-2ex]
\item
By applying this algorithm to the $n$-vertex graph without edges, we recover the well-known binary reflected Gray code (Section~\ref{sec:families}).
By applying the algorithm to paths and cycles, we obtain Gray codes for Fibonacci and Lucas words, respectively (Theorem~\ref{thm:fib-luc}).

\item
As vertex covers of a graph~$G$ are complements of independent sets of~$G$, we also obtain an algorithm for computing a Hamilton path on the vertex cover polytope of~$G$ (Section~\ref{sec:cover}).

\item
As mentioned before, matchings in a graph~$G$ are independent sets in its line graph~$L(G)$.
We thus obtain an algorithm for computing a Hamilton path on the matching polytope~$\scM(G)$, where any two consecutive matchings differ in an alternating path of length at most~3 (see Algorithm~M and Theorem~\ref{thm:matching}).
The (worst-case) delay is~$\cO(m)$, where $m$ is the number of edges of~$G$.

\item
Antichains in a poset~$P$ are independent sets in the comparability graph of~$P$.
We thus obtain an algorithm for computing a star exchange Hamilton path on Stanley's~\cite{MR824105} chain polytope~$\scC(P)$ (see Algorithm~A\sssm{} and Theorem~\ref{thm:antichain}).
In every step, either an element is removed from the current antichain, or an element is added and all elements in its downset are removed.
The amortized delay per visited antichain is~$\cO(n)$, where $n$ is the number of elements of~$P$.
Similarly, we obtain an algorithm for computing a star exchange Hamilton path on Stanley's order polytope~$\scO(P)$ (Theorem~\ref{thm:ideal}), with amortized delay~$\cO(n)$ per visited ideal of~$P$.
In every step, either a maximal element from the ideal is removed together with all elements that are uniquely covered by it, or an element is added together with its downset.
\end{itemize}

\subsection{Related work}

As our algorithm lists all independent sets of a given graph, one may wonder naturally about listing only the maximum independent sets, or only the inclusion-maximal ones.
In fact, the maximum independent set polytope is a facet of the independent set polytope, obtained by intersecting it with a hyperplane that fixes the cardinality of the sets.
As the maximum independent set problem is NP-hard in general graphs, there is likely no polynomial delay algorithm for listing maximum independent sets.
Tsukiyama, Ide, Ariyoshi and Shirakawa~\cite{MR476582} give an algorithm for listing all inclusion-maximal independent sets with delay~$\cO(mn)$ (see also~\cite{MR774940}), where here and in the following, $m$ and $n$ denote the number of edges and vertices of an input graph, respectively.
This was later improved by Makino and Uno~\cite{MR2159537}.
Johnson, Yannakakis and Papadimitriou~\cite{MR933271} provided an algorithm for listing all inclusion-maximal independent sets in lexicographic order with delay~$\cO(mn)$.

Kashiwabara, Masuda, Nakajima and Fujisawa~\cite{MR1146339} list maximum independent sets of a \emph{bipartite} graph~$G$ with delay~$\cO(1)$, but not according to the associated polytope.
Using the aforementioned reduction to an optimization problem, Merino and M\"utze~\cite{MR4795009} obtained an algorithm for computing a Hamilton path on the independent set polytope and maximum independent set polytope of a bipartite graph, with delay~$\cO(m\sqrt{n})$ and~$\cO(mn)$, respectively.\footnote{The delays stated here are without the $\log n$ factor that can be saved as described in~\cite{fink-et-al:2026}.}
Recall that the algorithm for the independent set polytope presented here has only amortized delay~$\cO(n)$ and works for general graphs.

Merino and M\"utze~\cite{MR4795009} also gave an algorithm for computing Hamilton paths on the matching polytope~$\scM(G)$ and perfect matching polytope of any graph~$G$, with delay $\cO(m\sqrt{n})$ and~$\cO(m\sqrt{n}\log n)$, respectively.
For the matching polytope, our new algorithm with delay~$\cO(m)$ improves on this.
For comparison, the algorithm of Uno~\cite{MR1917757} lists perfect matchings of a \emph{bipartite} graph with delay~$\cO(\log n)$, but not according to the polytope.

The authors of~\cite{MR4795009} also obtained an algorithm for computing Hamilton paths on the chain polytope~$\scC(P)$ and order polytope~$\scO(P)$ of any poset~$P$, with delay $\cO(n^{2.5}/\sqrt{\log n})$ and $\cO(n^4)$, respectively, where $n$ is the number of element of~$P$.
The amortized delay~$\cO(n)$ reported in this paper is a considerable improvement.

In the following, we describe a number of results on listing ideals of a poset in $k$-Gray code order, which means that the symmetric difference of any two consecutive ideals has size at most~$k$.
Note that a 1-Gray code is necessarily a Hamilton path on the order polytope, but a $k$-Gray code for $k\geq 2$ is in general not.
By considering the maximal elements of the ideals, a 1-Gray code for ideals also gives a Hamilton path on the chain polytope~$\scC(P)$ (see Section~\ref{sec:anti-ideal} below for details).

Koda and Ruskey~\cite{MR1231447} proved that ideals of a poset whose Hasse diagram is a forest of trees rooted at a minimal element admits a 1-Gray code that can be computed with delay~$\cO(1)$.

Pruesse and Ruskey~\cite{MR1267190} proved that the ideals of any poset admit a 2-Gray code, computable with delay~$\cO(n)$.
Habib, Nourine, Steiner~\cite{MR1474594} present a 2-Gray code for interval orders that can be computed with delay~$\cO(1)$.
For general posets~$P$, Habib, Medina, Nourine, Steiner~\cite{MR1828422} presented another 2-Gray code, computable with delay that is linear in the maximum down-degree of the Hasse diagram of~$P$, but each ideal appears twice in the listing.
In a recent breakthrough, Brenner and Fink~\cite{brenner-fink:2026} presented an algorithm for listing the ideals and antichains of any poset with amortized delay~$\cO(1)$, even in 3-Gray code order.
As mentioned before, these Gray codes are in general not Hamilton paths on the associated polytopes.

\subsection{Outline}

In Section~\ref{sec:algo} we present our basic algorithm for computing a Hamilton path on the independent set polytope, and an improved variant of the algorithm for higher speed.
In Section~\ref{sec:appl} we discuss in detail the aforementioned applications of our algorithm.
At the end of this paper in Section~\ref{sec:exp}, we present an extensive experimental comparison of both variants of our algorithm on a variety of input graphs.

\section{The algorithm}
\label{sec:algo}

\subsection{Preliminaries}

We define $[n]\coloneq \{1,\ldots,n\}$.
Throughout this paper we consider graphs~$G=([n],E)$ that have $[n]$ as their vertex set.
This corresponds to labeling the vertices of~$G$ with integers~$1,\ldots,n$.
The algorithms discussed later depend on this labeling, hence changing the labeling of the graph yields a different output.
Given a graph~$G=([n],E)$ we write $\rev(G)$ for the graph in which the labeling is reversed: label~$i$ is replaced by $n-i+1$ for all~$i\in[n]$.
For a subset $U\seq [n]$ we write $G[U]$ for the subgraph of~$G$ induced by~$U$.
We also write $G-U$ for the graph obtained from~$G$ by deleting all vertices in~$U$ and their incident edges.
For a singleton~$U=\{i\}$ we simply write $G-i\coloneq G-\{i\}$.
For a vertex $i\in [n]$ we write $N(i)\coloneq \{j\in [n]\mid ij\in E\}$ and $N^+(i)\coloneq N(i)\cup\{i\}$ for the \defi{open} and \defi{closed neighborhood} of~$i$ in~$G$, respectively.
We also partition the open neighborhood~$N(i)$ into two sets, namely those that are smaller or larger than~$i$, denoted by $N_<(i)\coloneq \{j\in N(i)\mid j<i\}$ and $N_>(i)\coloneq \{j\in N(i)\mid j>i\}$, respectively.

For a graph~$G$ with vertex set~$[n]$ we define the \defi{backward and forward degeneracy} as $d_<(G)\coloneq \max\{|N_<(i)|\mid i\in[n]\}$ and $d_>(G)\coloneq \max\{|N_>(i)|\mid i\in [n]\}$, respectively.
Note that we have $d_<(G)=d_>(\rev(G))$ and $d_<(\rev(G))=d_>(G)$, i.e., they only differ in reversing the order of vertices of~$G$.
In the literature there is only one notion of degeneracy, but in the runtime analysis of our algorithms both variants appear simultaneously, which is why we distinguish them.
If the labeling of vertices of a graph~$G$ is not given, the smallest possible (forward) degeneracy of a labeling of~$G$ and a corresponding (forward) ordering can be computed by repeatedly removing a vertex with minimum degree.
The maximum degree of~$G$ is denoted by~$\Delta(G)$, and we clearly have $d_<(G)\leq \Delta(G)$ and $d_>(G)\leq \Delta(G)$.

Furthermore, we write $\cI(G)$ for the collection of independent sets in~$G$:
\[ \cI(G)\coloneq \{U\seq [n]\mid U\text{ independent set in }G\}=\{U\seq [n]\mid G[U]\text{ has no edges }\}. \]
For any graph~$G=([n],E)$ and any vertex~$i\in[n]$, the set~$\cI(G)$ satisfies the recursion
\begin{equation}
\label{eq:IG-rec}
\cI(G)=\cI(G-i)\cup \big\{J\cup \{i\} \mid J\in \cI(G-N^+(i))\big\}.
\end{equation}
The first set on the right hand side contains all independent sets that do not contain the vertex~$i$, which are the same as independent sets in the graph obtained from~$G$ by removing the vertex~$i$, and the second set contains all independent sets that contain the vertex~$i$, which are the same as independent sets in the graph obtained from~$G$ by removing the closed neighborhood of~$i$, plus the vertex~$i$ added.

Given an independent set~$U\in \cI(G)$, we write $x^U\in\{0,1\}^n$ for its \defi{characteristic vector}, with $x^U_i=1$ if $i\in U$ and $x^U_i=0$ if $i\notin U$.
We define $X(G)\coloneq \{x^U\mid U\in \cI(G)\}\seq \{0,1\}^n$ for the set of all characteristic vectors of independent sets of~$G$.
We write $\scP(G)\coloneq \conv(X(G))=\conv(\{x^U\mid U\text{ independent set in }G\})\subset \mathbb{R}^n$ for the \defi{independent set polytope}, also known as stable set polytope, of~$G$.
The recursion~\eqref{eq:IG-rec} corresponds to intersecting the polytope~$\scP(G)$ with two hyperplanes $x_i=0$ and~$x_i=1$, which yields two faces of~$\scP(G)$, namely the independent set polytopes~$\scP(G-i)$ and~$\scP(G-N^+(i))$; see Figure~\ref{fig:ind}~(a).
The following adjacency condition on the independent set polytope is well-known and has already been mentioned in the introduction.
We denote the  \defi{symmetric difference} of two sets $U,V$ by $U\triangle V\coloneq (U\setminus V)\cup (V\setminus U)$.

\begin{lemma}[\cite{MR371732}]
\label{lem:edges-polytope}
For any two distinct independent sets~$U,V\in \cI(G)$, the pair of vertices $\{x^U,x^V\}$ is an edge of the polytope~$\scP(G)$ if and only if $G[U\triangle V]$ is connected.
\end{lemma}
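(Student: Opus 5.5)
We prove both directions with the standard characterization of edges of a polytope through linear functionals: $\{x^U,x^V\}$ is an edge of~$\scP(G)$ if and only if some weight vector $w\in\mathbb{R}^n$ has exactly $x^U$ and $x^V$ as maximizers of $w\cdot x$ over~$X(G)$. Equivalently, $\{x^U,x^V\}$ is \emph{not} an edge if and only if some point of the segment $[x^U,x^V]$, say its midpoint, can be written as a convex combination of vertices of~$\scP(G)$ other than $x^U,x^V$. Set $D\coloneq U\triangle V$. Since $U$ and $V$ are independent, every edge of $G[D]$ joins a vertex of $U\setminus V$ to a vertex of $V\setminus U$, so $G[D]$ is bipartite with these two sides.

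\emph{Necessity (edge $\Rightarrow$ connected).} Suppose $G[D]$ is disconnected. Pick a connected component $C$ of $G[D]$ with $C\neq D$, and define
\[
U'\coloneq (U\setminus C)\cup (V\cap C), \qquad V'\coloneq (V\setminus C)\cup (U\cap C).
\]
That is, we swap the $U$- and $V$-parts of~$U$ and~$V$ inside $C$ only. First we check that $U'$ is independent. Vertices of $U\cap V$ have no neighbors in $D$. There are also no edges between $C$ and $D\setminus C$, because $C$ is a component. Hence any edge inside $U'$ would have to lie entirely inside $U$ or entirely inside $V$, which is impossible. The same argument applies to $V'$. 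Coordinatewise we have $x^U+x^V=x^{U'}+x^{V'}$, and $\{U',V'\}\neq\{U,V\}$ because $\emptyset\neq C\neq D$. So the midpoint of $[x^U,x^V]$ is also the midpoint of two other vertices, and the segment is not an edge.

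\emph{Sufficiency (connected $\Rightarrow$ edge).} Assume $G[D]$ is connected. We construct weights $w$ as follows:
\begin{itemize}
\item $w_i=1$ for $i\in U\cap V$;
\item $w_i=-1$ for $i\notin U\cup V$;
\item $w_i=1$ for $i\in D$.
\end{itemize}
A large negative weight outside $U\cup V$ forces every maximizer to lie inside $U\cup V$. A maximizer $W$ then satisfies $U\cap V\subseteq W$, since vertices of $U\cap V$ have no neighbors in $U\cup V$. The contribution from $D$ is $|W\cap D|$, where $W\cap D$ ranges over independent sets of the bipartite graph $G[D]$. Uniform weights would not single out $U\setminus V$ and $V\setminus U$, however. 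We therefore perturb and take $w_i=1/|U\setminus V|$ on $U\setminus V$ and $w_i=1/|V\setminus U|$ on $V\setminus U$, so that $U$ and $V$ have equal value. Whether this choice isolates $U$ and $V$ as the only maximizers is unclear, and we do not pursue it further.

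Instead we use a cleaner argument that avoids choosing weights. Suppose
\[
\tfrac12(x^U+x^V)=\sum_k \lambda_k x^{W_k}, \qquad \lambda_k>0,\ \textstyle\sum_k\lambda_k=1,
\]
for independent sets $W_k$. We show that each $W_k$ is $U$ or~$V$, which proves the segment is an edge. Comparing coordinates gives three facts:
\begin{itemize}
\item $W_k\subseteq U\cup V$, since coordinates outside $U\cup V$ are $0$;
\item $U\cap V\subseteq W_k$, since those coordinates are $1$;
\item each $i\in D$ lies in $W_k$ for total weight exactly $1/2$.
\end{itemize}
For an edge $ij$ of $G[D]$, each independent $W_k$ contains at most one of $i,j$. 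The combined weight of $W_k$ containing $i$ or $j$ is $\tfrac12+\tfrac12=1$, so every $W_k$ contains exactly one endpoint of each edge of $G[D]$. Since $G[D]$ is connected and bipartite, $W_k\cap D$ must be one of the two color classes, $U\setminus V$ or $V\setminus U$. Hence $W_k\in\{U,V\}$.

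The main obstacle is this sufficiency direction. The convex-combination argument handles it: it turns connectivity plus bipartiteness into uniqueness of the decomposition, and that uniqueness is exactly what being an edge of a 0/1-polytope requires.
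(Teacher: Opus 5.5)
The paper gives no proof of this lemma; it quotes it from Chvátal~\cite{MR371732}. Your argument is a correct, self-contained proof.

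In the necessity direction, swapping $U$ and $V$ on a single component $C$ of $G[D]$ keeps both sets independent and preserves $x^U+x^V$. Since $\emptyset\neq C\subsetneq D$, neither new set equals $U$ or $V$.

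In the sufficiency direction, your coordinate comparison constrains every $W_k$ in a decomposition $\tfrac12(x^U+x^V)=\sum_k\lambda_k x^{W_k}$. Each such $W_k$ contains $U\cap V$, avoids $[n]\setminus(U\cup V)$, and meets every edge of $G[D]$ in exactly one endpoint. Connectivity then gives $W_k\in\{U,V\}$, and this also covers the degenerate case $|D|=1$.

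Two points need cleaning up.

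First, the criterion in your opening paragraph is misstated. With the parenthetical ``say its midpoint'' it is not an equivalence, even for 0/1-polytopes. In $\operatorname{conv}\{000,111,100,010,001\}$ the segment $[000,111]$ is not an edge, because a linear functional vanishing at $000$ and $111$ cannot be negative at all three unit vectors. Yet the midpoint has coordinate sum $\tfrac32$, so it does not lie in the convex hull of the unit vectors. The two facts you actually use are correct, and you should state them instead:
\begin{enumerate}
\item If the midpoint is also the midpoint of two other vertices, the segment is not an edge.
\item If every convex decomposition of the midpoint into vertices uses only $x^U$ and $x^V$, then the segment is an edge. Indeed, the smallest face $F$ containing the midpoint has it in its relative interior. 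So every vertex of $F$ occurs with positive weight in some such decomposition, whence $F=[x^U,x^V]$.
\end{enumerate}

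Second, delete the abandoned weight-vector paragraph. The tentative weights $1/|U\setminus V|$ and $1/|V\setminus U|$ do fail in general. For example, let $U\setminus V=\{a_1,a_2\}$ and $V\setminus U=\{b_1,b_2,b_3\}$, with $a_1$ adjacent to all $b_j$ and $a_2$ adjacent only to $b_3$. Then $G[D]$ is connected, but $\{b_1,b_2,a_2\}$ is independent with weight $\tfrac23+\tfrac12=\tfrac76>1$.

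If you want an explicit functional, the following works. Take $w_i=\deg_{G[D]}(i)$ on $D$, $w_i=M$ on $U\cap V$, and $w_i=-M$ outside $U\cup V$, with $M>|E(G[D])|$. For any independent $W$,
\[ \sum_{i\in W\cap D}\deg_{G[D]}(i)=\sum_{e\in E(G[D])}|e\cap W|\le|E(G[D])|, \]
with equality exactly when $W$ meets every edge of $G[D]$ once. This is your convex-combination counting argument in primal form.
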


For two independent sets~$U,V\in\cI(G)$, the transition $U\rightarrow V$ is called a \defi{star exchange} if either $V=U\setminus \{k\}$ for some $k\in U$ or $V=(U\cup\{k\})\setminus N(k)$ for some $k\notin U$, hence either a single vertex is removed from~$U$, or a single vertex is added to~$U$ and its neighbors are removed from it.
In both cases, $G[U\triangle V]$ is a star graph.
In particular, such a transition corresponds to an edge of the polytope, but not every edge of the polytope is of this type; see Figure~\ref{fig:ind}~(a).

For a string~$x$ and integer~$k\geq 1$, we write $x^k$ for the $k$-fold repetition of~$x$.
For a bit $b\in\{0,1\}$ we write $\ol{b}$ for its complement.

A listing $L=(x_1,\ldots,x_N)$ of binary strings $x_i\in\{0,1\}^n$ is called \defi{genlex} if all strings with the same suffix appear consecutively in~$L$.
Equivalently, this means that all strings ending with~0 appear before all strings ending with~1, or vice versa, and this property is true recursively within each block obtained by ignoring the same last bit.
A genlex listing of the set~$\cI(G)$ lists all elements of the first set on the right hand side of~\eqref{eq:IG-rec} (for $i=n$) before all elements of the second set, or vice versa.
Thus, a genlex Hamilton path in~$\scP(G)$ is one that visits all vertices of the face~$\scP(G-n)$ before all vertices of the face~$\scP(G-N^+(n))$, flipping the last bit at most once, and this property is true recursively within each face obtained by projecting out the last coordinate; see Figure~\ref{fig:ind}~(b)+(c).

\subsection{Basic algorithm}

Our algorithm for computing a Hamilton path on the independent set polytope is stated in pseudocode as Algorithm~I.
It takes as input a graph~$G=([n],E)$ and an initial independent set~$\tx\in X(G)$.
One possible initialization that is valid for any graph is the empty set, that is, $\tx=0^n$.
After the initialization in step~I1, the algorithm repeatedly performs a loop consisting of steps~I2--I5.
The current independent set is stored in the variable~$x\in\{0,1\}^n$, which is initialized in step~I1, visited in step~I2, and updated in step~I4, before at the end of step~I5 the control loops back to the next visit step.
The \emph{only} additional data structure used by the algorithm is a binary array $b=(b_1,\ldots,b_n)$ that is initialized in step~I1, used in step~I3 to compute the length~$k$ of the prefix of~$x$ that will be modified, and updated in step~I5.

\begin{algo}{Algorithm~I}{Independent sets by star exchanges on shortest prefixes}
For a graph~$G=([n],E)$, this algorithm computes a Hamilton path on the independent set polytope~$\scP(G)$, starting from an initial independent set~$\tx\in X(G)$.
\begin{enumerate}[label={\bfseries I\arabic*.}, leftmargin=8mm, noitemsep, topsep=3pt plus 3pt]
\item{} [Initialize] Set $x \gets \tx$ and $b\leftarrow 1^n$.
\item{} [Visit] Visit~$x$.
\item{} [Shortest prefix change] Compute $k\gets \min\{i\in[n]\mid b_i=1\wedge \sum_{j\in N_>(i)}x_j=0\}$.
Terminate if $k=\infty$.
\item{} [Update $x$] Set $x_k\gets \ol{x_k}$.
If $x_k=1$, then for all $i\in N_<(k)$ set $x_i\gets 0$.
\item{} [Update $b$] Set $b_k\gets 0$, and for $i=1,\ldots,k-1$ set $b_i\gets 1$, then goto~I2.
\end{enumerate}
\end{algo}

The idea of the algorithm is as follows.
The goal is to compute a genlex listing of all independent sets of~$G$, i.e., independent sets with the same suffix must appear consecutively.
Thus, the algorithm greedily changes the current independent set~$x$ in such a way that a new independent set~$x'$ is reached, i.e., one that has not been visited before, such that~$x'$ agrees with~$x$ in the longest possible suffix, or equivalently, such that $x'$ differs from~$x$ in the shortest possible prefix.
The length~$k$ of the shortest possible prefix of~$x$ with this property is computed (somewhat magically) in step~I3.
Once we know~$k$, all the remaining steps are trivial:
Step~I4 starts by complementing the $k$th bit in~$x$.
If $x_k=1$ before the update, hence if vertex~$k$ is in the current independent set, then it is removed and we have $x_k=0$ after the update.
No further updates are needed, because the new set is clearly independent.
On the other hand, if $x_k=0$ before the update, hence if vertex~$k$ is not in the current independent set, then it is added and we have $x_k=1$ after the update.
Consequently, all neighbors of~$k$ in~$G$ that are in the independent set have to be removed in order to obtain a valid independent set again, which happens in the second part of step~I4.
Recall that $N(k)=N_<(k)\cup N_>(k)$, and since all vertices $j\in N_>(k)$ satisfy $x_j=0$ by the definition of~$k$ in step~I3, only the bits at positions $i\in N_<(k)$ have to be set to~0 (they may already be~0, in which case they do not change).
We see that the next independent set differs from the previous one by a star exchange.

We now explain the magic of the auxiliary array~$b$ that allows us to compute which prefix changes lead to a new independent set, without actually keeping track of the visited ones (this would be prohibitively space- and time-expensive).
Specifically, we have the following invariant: If $b_i=0$, then all independent sets of~$G$ with suffix~$\ol{x_i},x_{i+1},\ldots,x_n$ have been visited before.
Therefore, prefixes of length~$i$ of~$x$ with $b_i=0$ are not eligible to be modified.
On the other hand, prefixes of length~$i$ with $b_i=1$ are eligible, provided that all neighbors~$j\in N_>(i)$ are not in the independent set (otherwise the change would extend beyond the prefix of length~$i$).
This justifies the computation of~$k$ in step~I3 as the shortest possible prefix length that leads to a new independent set.
In particular, if no such length exists ($k=\infty$) the algorithm terminates.
It also explains the update in step~I5, namely, $b_k$ is toggled from $b_k=1$ to $b_k=0$, and all earlier $b_i$ with $i<k$ are set to~1.
The initialization $b\leftarrow 1^n$ in step~I1 indicates that all prefixes are eligible to be modified, as no independent set has been visited yet.

Figure~\ref{fig:protocol} shows two runs of the algorithm on two different input graphs.
We encourage the reader to work through these examples in order to digest and appreciate the working of the algorithm (ignore for the moment the array~$s$, which will be introduced later).

\begin{figure}
\includegraphics{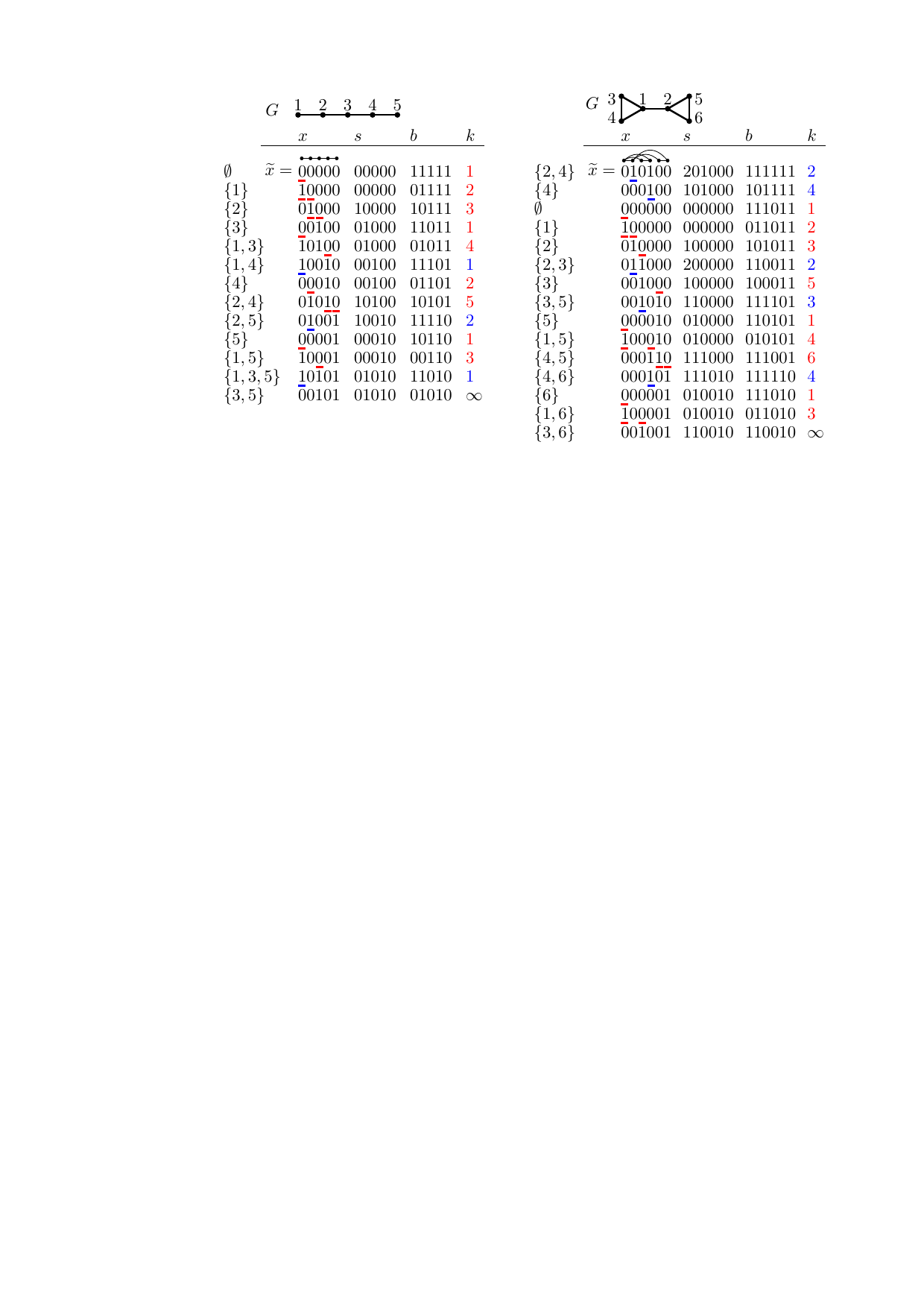}
\caption{Protocol of Algorithms~I and~I\sss{} for two input graphs, one of them being the 5-vertex path from Figure~\ref{fig:ind}, resulting in the Hamilton path shown there.
Both algorithms produce the same results, and the auxiliary array~$s$ is only used within Algorithm~I\sss{}.
All toggled bits of~$x$ are underlined.
Steps marked blue are those where $x_k$ changes from~1 to~0, i.e., vertex~$k$ is removed from the independent set.
Steps marked red are those where $x_k$ changes from~0 to~1, i.e., vertex~$k$ is added to the independent set, and all of its (smaller) neighbors are removed.
}
\label{fig:protocol}
\end{figure}

\begin{theorem}
\label{thm:algo-basic}
For every graph~$G=([n],E)$ and every initial independent set~$\tx\in X(G)$, Algorithm~I computes a genlex Hamilton path on the independent set polytope~$\scP(G)$ starting at~$\tx$, where any two consecutive independent sets differ in a star exchange.
The worst-case delay is~$\cO(n(d_>(G)+1))=\cO(n(\Delta(G)+1))=\cO(n^2)$.
\end{theorem}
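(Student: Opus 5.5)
We prove the theorem by induction on~$n$, comparing the run of Algorithm~I on~$G$ with recursive runs on the two faces given by~\eqref{eq:IG-rec} for $i=n$. The delay bound is separate and easy: step~I3 scans $i=1,\ldots,n$ and for each~$i$ sums over~$N_>(i)$; steps~I4 and~I5 cost~$\cO(n)$. This gives $\cO(n(d_>(G)+1))$ per iteration.

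The star-exchange property was already verified in the discussion following Algorithm~I. Since consecutive sets differ by a star, the induced subgraph $G[U\triangle V]$ is connected, so Lemma~\ref{lem:edges-polytope} shows that consecutive sets are adjacent on~$\scP(G)$. It therefore remains to prove that the listing visits every element of~$\cI(G)$ exactly once and is genlex.

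The inductive statement I would prove is the following. Run Algorithm~I on~$G$ from~$\tx$, but with~$b$ initialized to $1^{n-1}c$ for some $c\in\{0,1\}$. Then it visits all of $\cI(G)$ in genlex order when $c=1$, and all sets with last bit~$\tx_n$ when $c=0$. In both cases, once the last bit has been flipped (or at termination), the array~$b$ equals $0^{n-1}$ in its first $n-1$ entries; this is the fact that makes the run of the tail look like a fresh run.

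The key observation is what happens while~$x_n$ is fixed. For $i<n$, the condition in step~I3 is unchanged except for those $i\in N_<(n)$. If $x_n=1$, these vertices are forced to~$0$ in the face $G-N^+(n)$, and the condition $\sum_{j\in N_>(i)}x_j=0$ blocks them exactly. So on a prefix of length $k<n$ the algorithm behaves as Algorithm~I on the graph $G-n$ (case $x_n=0$) or on $G-N^+(n)$ (case $x_n=1$), with vertices relabeled but order preserved. The deleted coordinates are frozen at~$0$, and their entries of~$b$ play no role: a frozen vertex is never eligible in the face $x_n=1$, and in the face $x_n=0$ there are no deleted coordinates among $1,\ldots,n-1$. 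Applying the induction hypothesis to this first phase, with initial $b=1^{n-1}$, shows that all sets in the face of~$\tx$ are visited, and that afterwards $b_1=\cdots=b_{n-1}=0$ on the relevant coordinates. At that point the only possible~$k$ is~$n$, provided $b_n=1$ and $N_>(n)=\emptyset$, which holds trivially.

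Flipping the last bit lands in the other face, possibly removing the neighbors in~$N_<(n)$, which again yields a valid independent set. Step~I5 then sets $b_n=0$ and $b_1=\cdots=b_{n-1}=1$, so the second phase is a fresh run on the other face, and by induction it lists that face completely. The run then terminates, because $b_n=0$ and all smaller entries are~$0$ on the live coordinates. This proves that each set is visited exactly once, and the genlex property follows recursively.

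The main obstacle is to formulate the invariant precisely enough that the relabeled face runs coincide step by step with Algorithm~I. In particular, one must handle coordinates of $G-N^+(n)$ that are forced to~$0$ but whose~$b$-entries may equal~$1$. I would strengthen the induction statement to say that, after a full run, $b_i=0$ for every $i$ with $b_i$ ``live'', meaning that $i$ is not blocked by a $1$ in $N_>(i)$. I would then check that blocked indices never become eligible within the phase: their blocking neighbor $j>i$ keeps $x_j=1$, since any change to~$x_j$ at a prefix of length at least~$j$ is itself a change of that block.
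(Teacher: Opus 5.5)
Your proposal is correct and follows the paper's proof: induction on $n$, with the run splitting at vertex $n$ into a simulation of Algorithm~I on the face containing the initial set ($G-n$ or $G-N^+(n)$), then a single step with $k=n$ that resets $b_1,\ldots,b_{n-1}$ to $1$, then a fresh simulation on the other face. One correction: the invariant $b_1=\cdots=b_{n-1}=0$ at the end of a phase is false (for the single edge $G=([2],\{\{1,2\}\})$ started at $x=00$ the run ends with $b=10$), but your ``live'' version is simply the termination condition of the simulated face run, so no strengthened induction is needed; this is exactly what the paper uses when it says the first phase ends just before the algorithm would terminate if $b_n$ were not present.
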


We see that choosing a good labeling of the vertices of the input graph is essential for obtaining a good runtime bound for the algorithm.
As emphasized before, Algorithm~I works for any (re)labeling of the vertices of the underlying graph.
For graphs with bounded degeneracy, in particular with bounded maximum degree, the delay is linear in~$n$.

\begin{proof}
We first establish the correctness of the algorithm, and then the bound on the running time.

When describing the algorithm before, we have already argued that it performs only star exchanges, and that every~$x\in\{0,1\}^n$ visited indeed encodes an independent set of~$G$.
It remains to argue that the algorithm visits each independent set of~$G$ exactly once, hence that none of them is missed and none of them is visited repeatedly, and that the produced listing has the genlex property.
This argument uses induction on~$n$, the number of vertices of~$G$.
To settle the induction basis~$n=1$, consider the 1-vertex graph~$G=([1],\emptyset)$, which has two independent sets, namely $\cI(G)=\{\emptyset,\{1\}\}$, encoded by $x = x_1=0$ and~$x = x_1=1$.
One can check that for either of the two possible initializations~$\tx=0$ or $\tx=1$ the algorithm first visits~$\tx$, then its complement, and then terminates, as it should.
This settles the induction basis.

For the induction step, we assume that $n\geq 2$, and let $\tx\in X(G)$ be an arbitrary initial independent set of~$G=([n],E)$.
We first consider the case that $\tx_n=0$.
Observe that as long as the current independent set~$x$ satisfies $x_n=0$, the algorithm operates as if the last bit of~$x$ was not present.
This is because $x_n=0$ does not contribute to the sum $\sum_{j\in N>(i)} x_j$ in the definition of~$k$ (step~I3).
Thus, the algorithm behaves as it would on the input graph~$G-n$, i.e., the graph obtained from~$G$ by removing the last vertex~$n$.
Therefore, by induction, Algorithm~I will visit each independent set of~$G-n$ exactly once, thus producing each independent set of~$G$ that does not contain the vertex~$n$ exactly once.
Let $x$ be the last independent set visited in step~I2 in this first phase of the algorithm, just before the algorithm would terminate in step~I3 if the bit~$b_n$ was not present.
Now that the last bit~$b_n=1$ (recall step~I1) is present, the algorithm does not terminate, but computes $k\leftarrow n$ as the shortest possible prefix to modify.
Consequently, $x_n$ is flipped from~0 to~1, $b_n$ is flipped from~1 to~0, and all $b_1,\ldots,b_{n-1}$ are reset to~$1^{n-1}$.
At the end of this iteration of the main loop~$x$ is the first independent set of~$G$ that contains the vertex~$n$ (to be visited after the goto).
This is the beginning of the second phase of the algorithm.
In the second phase, while $x_n=1$, step~I3 will never choose~$k$ as one of the neighbors $i\in N_<(n)=N(n)$ of~$n$, and the corresponding bits of~$x$ will remain~0 and not be modified (also, the corresponding bits of~$b$ will remain~1 and not be modified).
As a result, we can ignore those bits, and observe that the algorithm behaves as it would on the input graph~$G-N^+(n)$, that is, the graph obtained from~$G$ by removing the closed neighborhood of~$n$.
Therefore, again by induction, Algorithm~I will visit each independent set of~$G-N^+(n)$ exactly once, thus producing each independent set of~$G$ that does contain the vertex~$n$ exactly once.
By~\eqref{eq:IG-rec}, Algorithm~I thus visits each independent set of~$G$ exactly once.
All $x\in X(G)$ with $x_n=0$ are visited before all those with~$x_n=1$, and by induction this property holds recursively within each block obtained by ignoring the same last bit.
It follows that the listing produced by the algorithm is indeed genlex.

To complete the correctness proof, note that the other case $\tx_n=1$ is proved analogously, by swapping the order of the two phases discussed before.

It remains to discuss the running time of the algorithm.
We conveniently assume that the graph~$G=([n],E)$ is represented by adjacency lists.
We write $m\coloneq |E(G)|$ for the number of edges of~$G$.
Once in the beginning, we precompute the sets~$N_<(i)$ and~$N_>(i)$ for all~$i\in[n]$ in time~$\cO(m)$.
As $m\leq n\cdot d_>(G)$, this is dominated by the claimed delay.
Now consider the main loop.
As $|N_<(k)|<n$ and $k\leq n$, steps~I4 and~I5 can clearly be computed in time~$\cO(n)$.
The bottleneck of the computation in the main loop is step~I3.
Indeed, each of the sums~$\sum_{j\in N_>(i)}x_j$ in the minimization to determine~$k$ can be computed in time~$d_>(G)$, and there are at most $n$ such sums to compute, which yields the bound~$\cO(n (d_>(G)+1))$ (the +1 prevents that the product is~0 in the exceptional case that~$d_>(G)=0$).
\end{proof}

Note that in an actual implementation we do not need to compute the entire sums~$\sum_{j\in N_>(i)}x_j$ in step~I3, but can shortcut once the first non-zero summand~$x_j$ is encountered, which immediately makes the sum nonzero, and thus allows us to discard the corresponding index~$i$ in the minimization.
However, this shortcut does not improve the theoretical guarantees proved before.

\subsection{Improved algorithm}

In the analysis before we noted that the bottleneck in the running time of Algorithm~I is the minimization to determine~$k$ in step~I3.
This directly motivates the introduction of an auxiliary array of integers $s=(s_1,\ldots,s_n)$, where
\begin{equation}
\label{eq:si}
s_i\coloneq \sum\nolimits_{j\in N_>(i)}x_j
\end{equation}
for $i\in[n]$.
With those values at hand, the minimization simply becomes $\min\{i\in[k]\mid b_i=1\wedge s_i=0\}$, which can be computed in linear time if the~$s_i$ are known.
However, introducing the array~$s$ comes at the cost of updating it once any bits of~$x$ are modified.
The required updates are straightforward, see lines~I40 and~I41 in the resulting pseudocode stated as Algorithm~I\sss{} below.

\begin{algo}{Algorithm~I\sss{}}{Independent sets by star exchanges on shortest prefixes}
For a graph~$G=([n],E)$, this algorithm computes a Hamilton path on the independent set polytope~$\scP(G)$, starting from an initial independent set~$\tx\in X(G)$.
\begin{enumerate}[label={\bfseries I\arabic*.}, leftmargin=8mm, noitemsep, topsep=3pt plus 3pt]
\item{} [Initialize] Set $x \gets \tx$, $s_i\gets \sum_{j\in N_>(i)}x_j$ for $i=1,\ldots,n$, and $b\gets 1^n$.
\item{} [Visit] Visit~$x$.
\item{} [Shortest prefix change] Compute $k\gets \min\{i\in[n]\mid b_i=1\wedge s_i=0\}$.
Terminate if $k=\infty$.
\item{} [Update $x$ and $s$] Set $x_k\gets \ol{x_k}$. \\
{\bf I40.} [Flipped $1\rightarrow 0$] If $x_k=0$, then for all $i\in N_<(k)$ set $s_i\gets s_i-1$. \\
{\bf I41.} [Flipped $0\rightarrow 1$] If $x_k=1$, then for all $i\in N_<(k)$ set $s_i\gets s_i+1$, and if $x_i=1$ set $x_i\gets 0$ and for all $j\in N_<(i)$ also set $s_j\gets s_j-1$.
\item{} [Update $b$] Set $b_k\gets 0$, and for $i=1,\ldots,k-1$ set $b_i\gets 1$, then goto~I2.
\end{enumerate}
\end{algo}

Note that the initialization of~$s$ in step~I1 can be simplified to $s\gets 0^n$ if the initial independent set $\tx=0^n$ is used.

\begin{theorem}
\label{thm:algo-faster}
For every graph~$G=([n],E)$ and every initial independent set~$\tx\in X(G)$, Algorithm~I\sss{} computes the same Hamilton path as Algorithm~I.
The worst-case delay is~$\cO(n+d_<(G)^2)=\cO(n+\Delta(G)^2)=\cO(n^2)$ and the amortized delay is~$\cO(n)$.
\end{theorem}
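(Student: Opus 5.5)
The plan is to show that Algorithm~I\sss{} maintains the invariant~\eqref{eq:si}, namely $s_i=\sum_{j\in N_>(i)}x_j$ for all $i\in[n]$, at every execution of step~I3. Given this invariant, step~I3 of Algorithm~I\sss{} computes exactly the same~$k$ as step~I3 of Algorithm~I. Steps~I4 and~I5 modify $x$ and $b$ in exactly the same way in both algorithms. Hence, by induction over the iterations, both algorithms visit the same sequence of independent sets, and all correctness properties carry over from Theorem~\ref{thm:algo-basic}.

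\textbf{Step 1: the invariant.} Step~I1 establishes the invariant directly. For step~I4, I use that $x_j$ enters $s_i$ exactly when $j\in N_>(i)$, i.e., exactly when $i\in N_<(j)$. So whenever a single bit $x_j$ changes, precisely the entries $s_i$ with $i\in N_<(j)$ must be adjusted by $\pm1$.
\begin{itemize}
\item In case~I40, only $x_k$ changes, from $1$ to $0$, and the algorithm decrements $s_i$ for $i\in N_<(k)$, as required.
\item In case~I41, $x_k$ changes from $0$ to $1$, and the algorithm increments $s_i$ for $i\in N_<(k)$. In addition, each $i\in N_<(k)$ with $x_i=1$ is reset to~$0$, and the algorithm decrements $s_j$ for $j\in N_<(i)$, as required.
\item No other bits change. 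In particular, the bits $x_j$ with $j\in N_>(k)$ are already~$0$, because $s_k=0$ by the choice of~$k$.
\item The order of these updates does not matter, since they are commutative increments and decrements.
\end{itemize}

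\textbf{Step 2: worst-case delay.} Step~I3 is a linear scan costing $\cO(n)$, and step~I5 costs $\cO(k)=\cO(n)$. In step~I4, case~I40 costs $\cO(|N_<(k)|)=\cO(d_<(G))$. Case~I41 iterates over $|N_<(k)|\le d_<(G)$ vertices~$i$, and for each removed~$i$ it iterates over $|N_<(i)|\le d_<(G)$ vertices, which costs $\cO(d_<(G)^2)$. Altogether the worst-case delay is $\cO(n+d_<(G)^2)$, and this is $\cO(n+\Delta(G)^2)=\cO(n^2)$.

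\textbf{Step 3: amortized delay (main obstacle).} The only expensive part is the inner loop of~I41. I would charge it by counting removals of the form $x_i\colon 1\to 0$ of a vertex $i\ne k$. Each iteration sets at most one bit from $0$ to $1$ (namely~$x_k$). A bit can go from $1$ to $0$ only after it was $1$ initially or was set to $1$ earlier. Hence, over the whole run, the total number of $1\to0$ flips is at most $n+N$, where $N=|\cI(G)|$ is the number of iterations. Each such flip triggers at most $|N_<(i)|<n$ updates of~$s$. The total work is therefore $\cO(Nn+(n+N)n)$. Since $N\ge n+1$ (the empty set and all singletons are independent), this gives amortized delay $\cO(n)$ per visited independent set. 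Beyond this, one only needs to note that the precomputation of the sets $N_<(i)$, $N_>(i)$ and of~$s$ costs $\cO(m+n)\le \cO(n^2)$, which is also absorbed by $N\ge n+1$.
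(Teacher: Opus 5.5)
Your proposal is correct and follows the same route as the paper. You establish correctness through the invariant~\eqref{eq:si}, and you verify the update rules more explicitly than the paper, which only cites the definition of~$s$. The worst-case bound comes from the double loop in step~I41. For the amortization, you charge each inner-loop pass to an earlier $0\to1$ flip or to an initial $1$-bit; this is the paper's coin-prepayment argument, written as an aggregate count. Your additional remark that $|\cI(G)|\ge n+1$ lets you fold the $\cO(m+n)$ initialization into the $\cO(n)$ amortized bound, whereas the paper states the initialization cost separately.
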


We see again that choosing a good labeling of the vertices of the input graph is important.
However, while Algorithm~I benefits from small forward degrees in the vertex ordering of~$G$, captured by the forward degeneracy~$d_>(G)$, Algorithm~I\sss{} benefits from small backward degrees, captured by the backward degeneracy~$d_<(G)$.
Importantly, the amortized delay of Algorithm~I\sss{}, which matters most, is linear in~$n$, regardless of the degrees in~$G$.
The initialization time of both algorithms is the same, namely $\cO(m)$, where $m\coloneq |E(G)|$.

An experimental comparison of both algorithms on a variety of input graphs, exploring in particular the issue of vertex orderings, is presented in Section~\ref{sec:exp}.

\begin{proof}
The first part of the statement follows from the correctness of Algorithm~I and the definition~\eqref{eq:si} of the entries of the array~$s$.

We now argue about the running time of the algorithm.
Clearly, steps~I3, I40 and I5 can be computed in time~$\cO(n)$.
Step~I41 contains a double loop that takes time~$\cO(d_<(G)^2)$.
Adding these contributions proves the delay bound.

It remains to argue about the amortization inside the double loop of step~I41.
Whenever a 0-bit at position~$k$ in~$x$ is toggled to a 1-bit, the outer loop of step~I41 (`for all $i\in N_<(k)$ set $s_i\gets s_i+1$') is executed.
We can thus, at no asymptotic cost, prepay costs for the 1-bit~$x_k$ in the amount of~$|N_<(k)|$: Think about placing 1~coin on each edge going from~$k$ to a smaller neighbor.
Now observe that that the inner loop in step~I41 (`for all $j\in N_<(i)$ also set $s_j\gets s_j-1$') is executed only for positions~$i\in[n]$ for which~$x_i=1$.
These are exactly the costs that have been prepaid: Think about taking back the previously placed coins from the edges going from~$i$ to a smaller neighbor while iterating through the loop.
The 1-bits in the initial string~$\tx$ are prepaid for in the initialization of~$s$ in step~I1.
The amortized cost of one iteration of the main loop is therefore~$\cO(n+d_<(G))=\cO(n)$.
\end{proof}

\section{Applications}
\label{sec:appl}

\subsection{Independent sets of special families of graphs}
\label{sec:families}

We first apply our algorithms to compute independent sets of special families of graphs, which recovers several Gray codes known from the literature.

We need the following definitions.
For any sequence of bitstrings~$L=(x_1,\ldots,x_N)$, we write $\rev(L)\coloneq (x_N,x_{N-1},\ldots,x_1)$ for the reverse sequence.
Furthermore, for any two bitstrings~$u,v$ we define $uLv\coloneq (ux_1v,ux_2v,\ldots,ux_Nv)$, where the prefix~$u$ is prepended and suffix~$v$ is appended to every bitstring in~$L$.
For a bitstring~$x$ and integer~$n\geq 1$ we write $[x]_n$ for the length~$n$ substring of the left-concatenation of infinitely many copies of~$x$.
In other words, $[x]_n$ is obtained by prepending~$x$ with more copies of~$x$ until all $n$ positions are filled, truncating the extra bits at the beginning.
For example, we have $[100]_5=00100$ and $[100]_{10}=0100100100$.

We write $E_n$ for the graph with vertex set~$[n]$ and no edges.
Clearly, we have $\cI(E_n)=2^{[n]}$, i.e., every possible subset of~$[n]$ is independent.
The listing of independent sets obtained when running Algorithms~I and~I\sss{} with input graph~$E_n$ and initial independent set~$\tx=0^n$ is the well-known \defi{binary reflected Gray code}, defined inductively by
\[ B_1=0,1 \;\;\text{and}\;\; B_n\coloneq B_{n-1}0,\rev(B_{n-1})1 \text{ for } n\geq 2; \]
see Figure~\ref{fig:listings}~(a).
I.e., we obtain a listing of all bitstrings of length~$n$, such that any two consecutive strings differ in a single bit.
In the following, we use the term \defi{1-Gray code} for such a listing of bitstrings.
In terms of independent sets, this further restricts the allowed exchanges to removing or adding only a single vertex in every step (i.e., it is never necessary to remove the neighbors of an added vertex from the independent set).

\begin{figure}[b!]
\centerline{
\setlength{\tabcolsep}{3pt}
\renewcommand{\arraystretch}{1.5}
\begin{tabular}{ccccccccccc}
(a) $B_2$ & $B_3$ & $B_4$ & $B_5$ & (b) $F_4$ & $F_5$ & $F_6$ & $F_7$ & (c) $L_4$ & $L_5$ & $L_7$ \\[-5mm]
\phantom{(a)}\raisebox{-\height}{\includegraphics{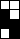}} &
\raisebox{-\height}{\includegraphics{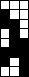}} &
\raisebox{-\height}{\includegraphics{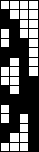}} &
\raisebox{-\height}{\includegraphics{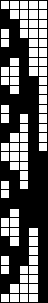}} &
\phantom{(b)}\raisebox{-\height}{\includegraphics{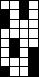}} &
\raisebox{-\height}{\includegraphics{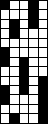}} &
\raisebox{-\height}{\includegraphics{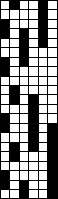}} &
\raisebox{-\height}{\includegraphics{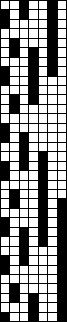}} &
\phantom{(c)}\raisebox{-\height}{\includegraphics{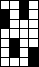}} &
\raisebox{-\height}{\includegraphics{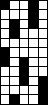}} &
\raisebox{-\height}{\includegraphics{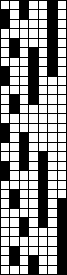}}
\end{tabular}
}
\caption{1-Gray codes produced by Algorithms~I and~I\sss{} for various special input graphs (0-bits=white squares, 1-bits=black squares), namely (a)~empty graphs $E_2,E_3,E_4,E_5$ (binary reflected Gray code), (b)~path graphs $P_4,P_5,P_6,P_7$ (Fibonacci words), and (c)~cycle graphs $C_4,C_5,C_7$ (Lucas words).}
\label{fig:listings}
\end{figure}

We write $P_n$ and $C_n$ for the path graph and cycle graph with vertex set~$[n]$, with vertices labeled~$1,\ldots,n$ along the path or cycle, respectively.
Clearly, independent sets of~$P_n$ and~$C_n$ are subsets of~$[n]$ that contain no two consecutive vertices or no two cyclically consecutive vertices, respectively;
in the latter case vertex~1 and vertex~$n$ are also considered consecutive.
The corresponding characteristic vectors have been studied in the pattern-avoidance community.
Specifically, \defi{Fibonacci words} are bitstrings that avoid two consecutive 1s, and \defi{Lucas words} are bitstrings that avoid two cyclically consecutive 1s.
Parametrizing by the length~$n$ of the words, they are counted by the Fibonacci numbers and Lucas numbers, respectively (OEIS sequences~A000045 and A000032).

We inductively define a listing~$F_n$ of all Fibonacci words of length~$n$ by
\begin{equation}
\label{eq:Fn}
F_1\coloneq 0,1 \;\;\text{and}\;\; F_n\coloneq \rev(F_{n-1})0,\rev(F_{n-2})01 \text{ for } n\geq 2;
\end{equation}
see Figure~\ref{fig:listings}~(b) (cf.~\cite{MR1934834} and~\cite{mikawa_semba_2005}).
Similarly, for all $n$ that are not multiples of~3 we define a listing~$L_n$ of all Lucas words of length~$n$ by
\begin{equation}
\label{eq:Ln}
L_1\coloneq 0,1, L_2\coloneq 10,00,01 \;\;\text{and}\;\; L_n\coloneq \rev(F_{n-1})0,0\rev(F_{n-3})01 \text{ for } n\geq 4 \text{ with }3\nmid n;
\end{equation}
see Figure~\ref{fig:listings}~(c) (cf.~\cite{MR2187406}).
Maybe slightly unexpectedly, the definition of~$L_n$ recurses to~$F_{n-1}$ and~$F_{n-3}$, \emph{not} to $L_{n-1}$ and~$L_{n-3}$.
Using these definitions, one can check easily by induction that both listings~$F_n$ and~$L_n$ are genlex 1-Gray codes that start with $[010]_n$ and end with $[001]_n$.
We note that Figure~\ref{fig:ind}~(c) also shows a listing of Fibonacci words of length~5, but the initial bitstring is different and in some steps, more than one bit changes, so the listing is not a 1-Gray code.

Note that for any set of bitstrings~$X$ and any initial string~$x\in X$, if there is a genlex 1-Gray code for~$X$ starting with~$x$, then this is the \emph{only} genlex listing of~$X$ starting with~$x$ (whether 1-Gray code or not; cf.~\cite[Thm.~5]{MR4795009}).
From Theorem~\ref{thm:algo-basic} and~\ref{thm:algo-faster} we know that both of our algorithms compute genlex listings, and hence these must be the listings defined before.

\begin{theorem}
\label{thm:fib-luc}
Running Algorithms~I or~I\sss{} with input graph~$G=P_n$ for $n\geq 1$ or~$G=C_n$ for $n\geq 1$ with $3\nmid n$ and initial independent set~$\tx\coloneq [010]_n$ yields the genlex 1-Gray codes~$F_n$ and~$L_n$ of Fibonacci and Lucas words defined in~\eqref{eq:Fn} and~\eqref{eq:Ln}, respectively.
\end{theorem}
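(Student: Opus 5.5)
The plan is to use the uniqueness principle stated just before the theorem. If a set $X$ of bitstrings has a genlex 1-Gray code starting at $x$, then it is the only genlex listing of $X$ starting at $x$. By Theorems~\ref{thm:algo-basic} and~\ref{thm:algo-faster}, Algorithms~I and~I\sss{} started at $\tx=[010]_n$ produce a genlex listing of $X(P_n)$ or $X(C_n)$ starting at $\tx$, and both algorithms produce the same listing. So it suffices to show three things about $F_n$ and $L_n$:
\begin{itemize}
\item they list exactly the Fibonacci words, respectively the Lucas words, of length $n$, each once;
\item they are genlex 1-Gray codes;
\item their first element is $[010]_n$.
\end{itemize}
Then the algorithm's output must coincide with $F_n$ or $L_n$.

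For $F_n$ we prove by strong induction on $n$ that $F_n$ is a genlex 1-Gray code of all Fibonacci words of length $n$, that it starts with $[010]_n$, and that it ends with $[001]_n$. We first check the small cases $n=1,2$, taking $F_0$ to be the empty word for $n=2$. For the step, a Fibonacci word either ends in $0$ with an arbitrary Fibonacci prefix of length $n-1$, or ends in $01$ with a Fibonacci prefix of length $n-2$. This shows that $\rev(F_{n-1})0$ and $\rev(F_{n-2})01$ partition the set.

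Each block is a genlex 1-Gray code by induction, since reversal and appending a common suffix preserve both properties, and together the two blocks form a genlex listing. The junction is the only thing to check. The last word of $\rev(F_{n-1})0$ is $[010]_{n-1}0$, and the first word of $\rev(F_{n-2})01$ is $[001]_{n-2}01$. We verify these differ in one bit, using the identities $[010]_{n-1}=[001]_{n-2}0$ up to the last bit, via the periodic description $[x]_n$. Concretely, $[010]_{n-1}0$ and $[001]_{n-2}01$ should agree except at position $n$ or $n-1$; checking residues of $n$ mod~$3$ handles this. The endpoint claims follow the same way: the first word is $[001]_{n-1}0=[010]_n$, and the last word is $[010]_{n-2}01=[001]_n$. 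All of these are periodic-string identities checked mod~$3$.

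For $L_n$ we do the same, with the decomposition of Lucas words by the last bit. Words with $x_n=0$ are Fibonacci words of length $n-1$ followed by $0$. Words with $x_n=1$ force $x_1=x_{n-1}=0$, and so have the form $0\,y\,01$ with $y$ a Fibonacci word of length $n-3$. This matches $\rev(F_{n-1})0,0\rev(F_{n-3})01$, whose blocks are genlex 1-Gray codes by the Fibonacci result. At the junction, $[010]_{n-1}0$ must differ in one bit from $0[001]_{n-3}01$. The first word of $L_n$ is $[001]_{n-1}0$, which must equal $[010]_n$.

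These identities are exactly where $3\nmid n$ is needed. The leading $0$ of the second block must agree with the first bit of $[010]_{n-1}$, which is $0$ precisely when $n\not\equiv 0 \pmod 3$ after the appropriate shift. The small cases $L_1,L_2$ are checked directly against $C_1,C_2$; for $C_2$ the graph is a single edge, and for $C_1$ we interpret the case suitably.

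The main obstacle is purely bookkeeping: verifying the junction and endpoint identities for the periodic strings $[010]_\cdot$ and $[001]_\cdot$ in each residue class of $n$ mod~$3$. For the Lucas case this includes confirming that $3\nmid n$ is exactly what makes the junction a single bit flip. Once these are done, the uniqueness principle finishes the proof.
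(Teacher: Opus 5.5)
Your route is the paper's: invoke the uniqueness principle stated before the theorem, then check by induction that $F_n$ and $L_n$ are genlex 1-Gray codes starting at $[010]_n$. That verification is fine. In fact $[010]_{m+1}=[001]_m0$ and $[001]_{m+2}=[010]_m01$ hold for every $m\ge 0$, so the Fibonacci junction always differs exactly in bit~$n$. The only real residue condition is the first bit of $[010]_{n-1}$ at the Lucas junction, which is $0$ iff $3\nmid n$.

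The gap is the principle itself. In the form you state it, that a genlex 1-Gray code starting at $x$ is the only genlex listing of $X$ starting at $x$, it is false. For $X=\{0,1\}^2$, both $00,10,11,01$ (a 1-Gray code) and $00,10,01,11$ are genlex listings starting at $00$. It also fails in the present setting: $010,100,000,001,101$ is a genlex Hamilton path on $\scP(P_3)$ starting at $[010]_3$, and every step is a star exchange, yet $F_3=010,000,100,101,001$. So knowing only that Algorithm~I outputs a genlex listing (Theorem~\ref{thm:algo-basic}) does not identify its output, and your final step does not follow. The paper states the principle in the same unqualified form, so you have inherited the defect from there.

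The missing ingredient is a property of Algorithm~I itself. Two facts are needed.

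(i) In any genlex listing, let $y$ be followed by $z$, and let $V$ be the set of strings listed up to and including $y$. Then the highest position $j$ where $y$ and $z$ differ is determined by $y$ and $V$. It is the least index such that some string in $X\setminus V$ agrees with $y$ in positions $j+1,\dots,n$. This holds because the block of strings with that suffix is contiguous and contains both $y$ and an unlisted string.

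(ii) Algorithm~I passes from $y$ to its successor by flipping bit $k$ and otherwise only clearing bits in $N_<(k)$. So $k$ is the highest changed position. Moreover, whenever flipping bit $k$ alone already gives an independent set, that set is the successor.

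Now compare the algorithm's listing with $F_n$ (or $L_n$) element by element. Suppose they agree up to $y$. By (i), both successors have the same highest changed position $j$. In the 1-Gray code the successor is $y$ with bit $j$ flipped, which is independent, so by (ii) the algorithm produces the same successor.

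Alternatively, induct on $n$ via the two phases in the proof of Theorem~\ref{thm:algo-basic}. Prove simultaneously that the output is $F_m$ when starting from $[010]_m$ and $\rev(F_m)$ when starting from $[001]_m$. Both starting points are needed, because the first phase from $[010]_n=[001]_{n-1}0$ runs on $P_{n-1}$ from $[001]_{n-1}$. With either addition your argument is complete.
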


\subsection{Vertex Covers}
\label{sec:cover}

A \defi{vertex cover} of a graph~$G=([n],E)$ is a subset of vertices~$U\seq [n]$, such that every edge of~$G$ is incident with at least one vertex in~$U$.
It is easy to see that $U\seq [n]$ is a vertex cover if and only if its complement~$[n]\setminus U$ is an independent set.
Consequently, the \defi{vertex cover polytope}~$\overline{\scP}(G)\coloneq \{x^U\mid U\text{ vertex cover of }G\}$ is obtained by applying the reflection $x\mapsto 1-x$ to the independent set polytope~$\scP(G)$, and thus both polytopes have the same combinatorial structure.
Both Algorithm~I and Algorithm~I\sss{} can be used to compute a Hamilton path on the vertex cover polytope~$\overline{\scP}(G)$, by computing the complement~$\overline{x}$ just before every visit step.
Both algorithms compute the same genlex Hamilton path on the vertex cover polytope~$\overline{\scP}(G)$ within the time bounds guaranteed by Theorems~\ref{thm:algo-basic} and~\ref{thm:algo-faster}, where any two consecutive vertex covers differ in a complemented star exchange, i.e., either a single vertex is added to the cover, or a vertex is removed and its neighbors are added.

\subsection{Matchings}

Matchings of a graph are independent sets in its line graph, so we can apply our algorithms to compute a Hamilton path on the corresponding polytope.
We proceed to translate our algorithm to this setting.

A \defi{matching} in a graph~$G$ is a subset of edges of~$G$, no two of which have a vertex in common.
Given a graph~$G$, its \defi{line graph} $L(G)$ has a vertex for every edge of~$G$, and an edge between any two vertices corresponding to edges of~$G$ that are incident to each other.
Note that the line graph~$L(G)$ is \defi{claw-free}, that is, it has no induced star with three rays.
This expresses the fact that an edge~$\{u,v\}$ in~$G$ can be incident with at most two other edges that have no end vertices in common with each other (one incident with~$u$ and one incident with~$v$).
We observe that matchings in~$G$ are in one-to-one correspondence with independent sets in the line graph~$L(G)$.

We write $\cM(G)$ for the set of all matchings of~$G$ and $X(G)$ for the corresponding set of characteristic vectors, i.e., $X(G)\coloneq \{x^M\mid M\in\cM(G)\}$.
The \defi{matching polytope} is defined $\scM(G)\coloneq \conv(X(G))=\conv(\{x^M\mid M\text{ matching in }G\})$~\cite{MR371732}.
By Lemma~\ref{lem:edges-polytope}, for any two distinct matchings~$M,N\in\cM(G)$, the pair of vertices~$\{x^M,x^N\}$ is an edge of the polytope~$\scM(G)$ if and only if the symmetric difference $M\triangle N$ is a single path or cycle.

\begin{wrapfigure}{r}{0.35\textwidth}
\centering
\includegraphics{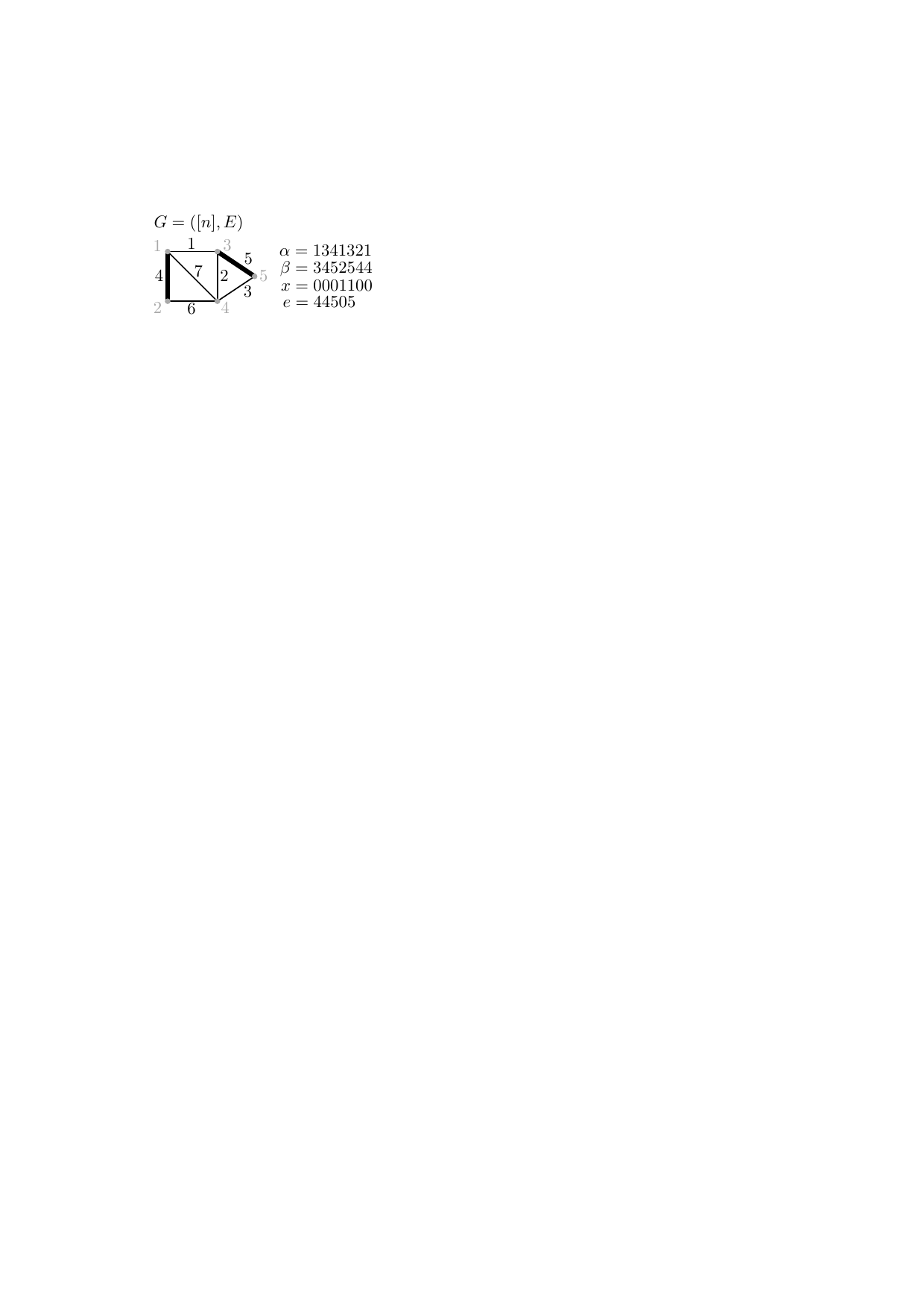}
\caption{Data structures used in Algorithm~M. Matching edges are bold.}
\label{fig:match}
\end{wrapfigure}
We consider a graph~$G=([n],E)$ with $m$ edges.
We formalize the labeling of the edges of~$G$ by integers~$1,\ldots,m$ by two mappings $\alpha,\beta:[m]\rightarrow [n]$, where $\alpha(i)$ and $\beta(i)$ are the two end vertices of the edge with label~$i$.
We thus refer to the pair of mappings $\alpha,\beta$ as an \defi{edge labeling}.
We translate Algorithm~I to the setting of matchings as follows:
The array $x=(x_1,\ldots,x_m)$ is the characteristic vector of the current matching, and the array~$b=(b_1,\ldots,b_m)$ has the same meaning as before.
In addition, we introduce an array $e=(e_1,\ldots,e_n)$ where $e_i\in[m]\cup\{0\}$, with the following meaning: $e_j=i\in[m]$ means that vertex~$j$ is incident with edge~$i$ of the matching, whereas $e_j=0$ means that vertex~$j\in[n]$ is not incident with any edge of the matching; see Figure~\ref{fig:match}.
With these data structures, we obtain Algorithm~M stated below.
The condition~$(x_i=1\vee \max\{e_{\alpha(i)},e_{\beta(i)}\}<i)$ in line~M2 is satisfied if $x_i=1$, i.e., edge~$i$ is currently in the matching, or if $x_i=0$, i.e., edge~$i$ is not currently in the matching, and both matching edges incident to either~$\alpha(i)$ or~$\beta(i)$ have smaller labels (this includes the case that there is no such edge, because then the special label~0 is used).
The star exchanges of Algorithm~I thus translate to either removing an edge from the matching or adding an edge and removing at most two other edges.
Formally, for two matchings~$M,N\in\cM(G)$, the transition $M\rightarrow N$ is called a \defi{$\{1,2,3\}$-exchange}, if either $N=M\setminus\{e\}$ for some $e\in M$ or $N=(M\cup\{e\})\setminus F$ for some~$e\notin M$ and subset $F\seq M$ such that $e$ shares an endpoint with every edge $f\in F$ (thus $|F|\leq 2$).
In both cases, $M\triangle N$ is a path of length at most~3.

\begin{algo}{Algorithm~M}{Matchings by shortest prefix exchanges}
For an $m$-edge graph~$G=([n],E)$ with an edge labeling $\alpha,\beta\in [n]^{[m]}$, this algorithm computes a Hamilton path on the matching polytope~$\scM(G)$, starting from an initial matching $x\in X(G)$.
\begin{enumerate}[label={\bfseries M\arabic*.}, leftmargin=9mm, noitemsep, topsep=3pt plus 3pt]
\item{} [Initialize] Set $x \gets \tx$, $b\gets 1^m$ and $e\gets 0^n$.
For $i=1,\ldots,m$, if $x_i=1$ set $e_{\alpha(i)}\gets i$ and $e_{\beta(i)}\gets i$.
\item{} [Visit] Visit~$x$.
\item{} [Shortest prefix change] Compute $k\gets \min\{i\!\in\![m]\mid b_i\!=\!1\wedge (x_i\!=\!1\vee \max\{e_{\alpha(i)},e_{\beta(i)}\}\!<\!i)\}$.
Terminate if $k=\infty$.
\item{} [Update $x$ and~$e$] Set $x_k\gets \ol{x_k}$. \\
{\bf M40.} [Flipped $1\rightarrow 0$] If $x_k=0$, then set $e_{\alpha(k)}\gets 0$ and $e_{\beta(k)}\gets 0$. \\
{\bf M41.} [Flipped $0\rightarrow 1$] If $x_k=1$, then for each $j\in \{\alpha(k),\beta(k)\}$, if $e_j>0$ set $x_{e_j}\gets 0$, $e_{\alpha(e_j)}\gets0$ and $e_{\beta(e_j)}\gets0$. Then set $e_j\gets k$.
\item{} [Update $b$] Set $b_k\gets 0$, and for all $i=1,\ldots,k-1$ set $b_i\gets 1$, then goto~M2.
\end{enumerate}
\end{algo}

\begin{theorem}
\label{thm:matching}
For every $m$-edge graph~$G=([n],E)$ and every initial matching~$\tx\in X(G)$, Algorithm~M computes a genlex Hamilton path on the matching polytope~$\scM(G)$ starting at~$\tx$, where any two consecutive matchings differ in a $\{1,2,3\}$-exchange.
The worst-case delay is~$\cO(m)$.
\end{theorem}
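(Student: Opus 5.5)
The plan is to reduce Theorem~\ref{thm:matching} to Theorem~\ref{thm:algo-basic}. Apply that theorem to the line graph $H\coloneq L(G)$, whose vertex set is $[m]$ via the edge labeling $\alpha,\beta$. Two vertices $i,j$ of $H$ are adjacent if and only if the edges $i$ and $j$ of $G$ share an endpoint. Independent sets of $H$ are exactly the matchings of $G$, so $X(G)$ for matchings coincides with $X(H)$, and $\scM(G)=\scP(H)$. It then suffices to show that Algorithm~M is a faithful simulation of Algorithm~I on input~$H$ with initial set~$\tx$. The following invariant does this: at every visit step, the arrays $x$ and $b$ of Algorithm~M agree with those of Algorithm~I, and for every vertex $j\in[n]$ we have $e_j=i$ if edge $i$ is the (unique, since $x$ is a matching) matching edge incident to~$j$, and $e_j=0$ otherwise. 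Step~M1 clearly establishes the invariant.

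The key step is to check that step~M3 computes the same $k$ as step~I3 on~$H$. We need, for each $i$ with $b_i=1$, that $\sum_{j\in N_>(i)}x_j=0$ in $H$ holds if and only if $x_i=1$ or $\max\{e_{\alpha(i)},e_{\beta(i)}\}<i$.

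If $x_i=1$, then $i$ is in the matching, so no other edge incident to $\alpha(i)$ or $\beta(i)$ is in it, and the sum vanishes.

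If $x_i=0$, the matching edges adjacent to $i$ in $H$ are exactly $e_{\alpha(i)}$ and $e_{\beta(i)}$ (whichever are nonzero). These are edges distinct from $i$, and the sum over larger neighbors vanishes precisely when both labels are $<i$; the label $0$ handles the absent case.

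Next we verify that M4 reproduces I4 together with maintenance of~$e$. In the $1\to0$ case only edge $k$ leaves the matching, so the endpoints of $k$ become free. In the $0\to1$ case, I4 zeroes all $x_i$ with $i\in N_<(k)$. By the M3 condition, the only neighbors of $k$ with $x=1$ are $e_{\alpha(k)}$ and $e_{\beta(k)}$, and both are smaller than~$k$. So zeroing exactly these edges is equivalent, and clearing the $e$-entries at both of their endpoints and then setting $e_{\alpha(k)}=e_{\beta(k)}=k$ restores the invariant. One subtlety is that when processing the second endpoint $j$, the entry $e_j$ may already have been cleared by removal of the first edge. This happens if one edge covers both endpoints, which is impossible for simple $G$ and $x_k=0$; in any case the clearing is harmless. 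Step~M5 is identical to step~I5.

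Hence the output equals that of Algorithm~I on~$H$. Theorem~\ref{thm:algo-basic} then gives a genlex Hamilton path on $\scP(H)=\scM(G)$ starting at~$\tx$, with star exchanges in~$H$. Since $H$ is claw-free, such a star adds $k$ and removes a set $F$ of at most two pairwise non-incident edges, each sharing an endpoint with~$k$. This is exactly a $\{1,2,3\}$-exchange, whose symmetric difference is a path of at most three edges.

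For the delay, M3 evaluates each of $m$ candidates in $\cO(1)$ time using~$e$, M4 takes $\cO(1)$ time, and M5 takes $\cO(m)$ time, so the worst-case delay is $\cO(m)$. The main obstacle is the equivalence in M3, in particular arguing that only $e_{\alpha(i)}$ and $e_{\beta(i)}$ need inspecting. This relies on the matching invariant, which ensures each endpoint carries at most one matching edge.
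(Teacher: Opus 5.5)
Your proposal is correct and takes the same route as the paper, which proves correctness by applying Theorem~\ref{thm:algo-basic} to the line graph $L(G)$ and reads off the $\cO(m)$ delay directly from steps M3--M5. You go further by making explicit what the paper leaves to the reader: the simulation invariant on $x$, $b$ and $e$, and in particular that the test in M3 is equivalent to the test in I3.
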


\begin{proof}
The correctness of the algorithm follows from Theorem~\ref{thm:algo-basic}, and the running time can be seen directly.
\end{proof}

\subsection{Antichains and ideals}
\label{sec:anti-ideal}

Antichains of a poset are independent sets in its comparability graph, so we can apply our algorithms to compute a Hamilton path on the corresponding polytope.
We now set up the required notation to translate all notions and results to the setting of posets.

Given a poset~$P=([n],<)$, we write $D(i)\coloneq \{j\in[n]\mid j\leq i\}$ and $U(i)\coloneq \{j\in[n]\mid i\leq j\}$ for the \defi{downset of~$i$} and \defi{upset of~$i$}, respectively.
We also define the \defi{proper downset} and \defi{proper upset} as $D_<(i)\coloneq D(i)\setminus\{i\}=\{j\in[n]\mid j<i\}$ and $U_>(i)\coloneq U(i)\setminus\{i\}=\{j\in[n]\mid i<j\}$, respectively.
The \defi{comparability graph} of~$P$ is the graph with vertex set~$[n]$ and an edge~$\{i,j\}$ if either $i<j$ or $j<i$ in~$P$, i.e., it has edges between all comparable pairs of elements in the poset.
We say that $P$ is \defi{connected} if its comparability graph is connected.
For a subset $A\seq [n]$ we write $P[A]$ for the subposet induced by~$A$, consisting of all elements in~$A$ with the comparabilities given by~$P$.

A \defi{chain} or \defi{total order} is a poset in which every two elements are comparable.
An \defi{antichain} is a poset in which no two elements are comparable.
For a poset~$P$, we write $\cA(P)$ for the set of all antichains in~$P$ and~$X(P)$ for the corresponding set of characteristic vectors, i.e., $X(P)\coloneq \{x^A\mid A\in\cA(P)\}$.
Note that antichains in~$P$ are in one-to-one correspondence with independent sets in the comparability graph of~$P$.
The \defi{chain polytope} of~$P$ is defined as $\scC(P)\coloneq \conv(X(P))=\conv(\{x^A\mid A\text{ antichain in }P\})$~\cite{MR824105}\footnote{The motivation for the name chain polytope (instead of antichain polytope) is that its hyperplane representation is based on chains in the poset. However, in our context the vertex representation is more convenient.}.
By Lemma~\ref{lem:edges-polytope}, for any two distinct antichains~$A,B\in\cA(P)$, the pair of vertices~$\{x^A,x^B\}$ is an edge of the polytope~$\scC(P)$ if and only if the subposet $P[A\triangle B]$ is connected.
In particular, a star exchange between two independent sets in the comparability graph corresponds to either removing an element from the antichain, or adding an element and removing its proper downset or proper upset.

\begin{wrapfigure}{r}{0.45\textwidth}
\centering
\includegraphics[page=1]{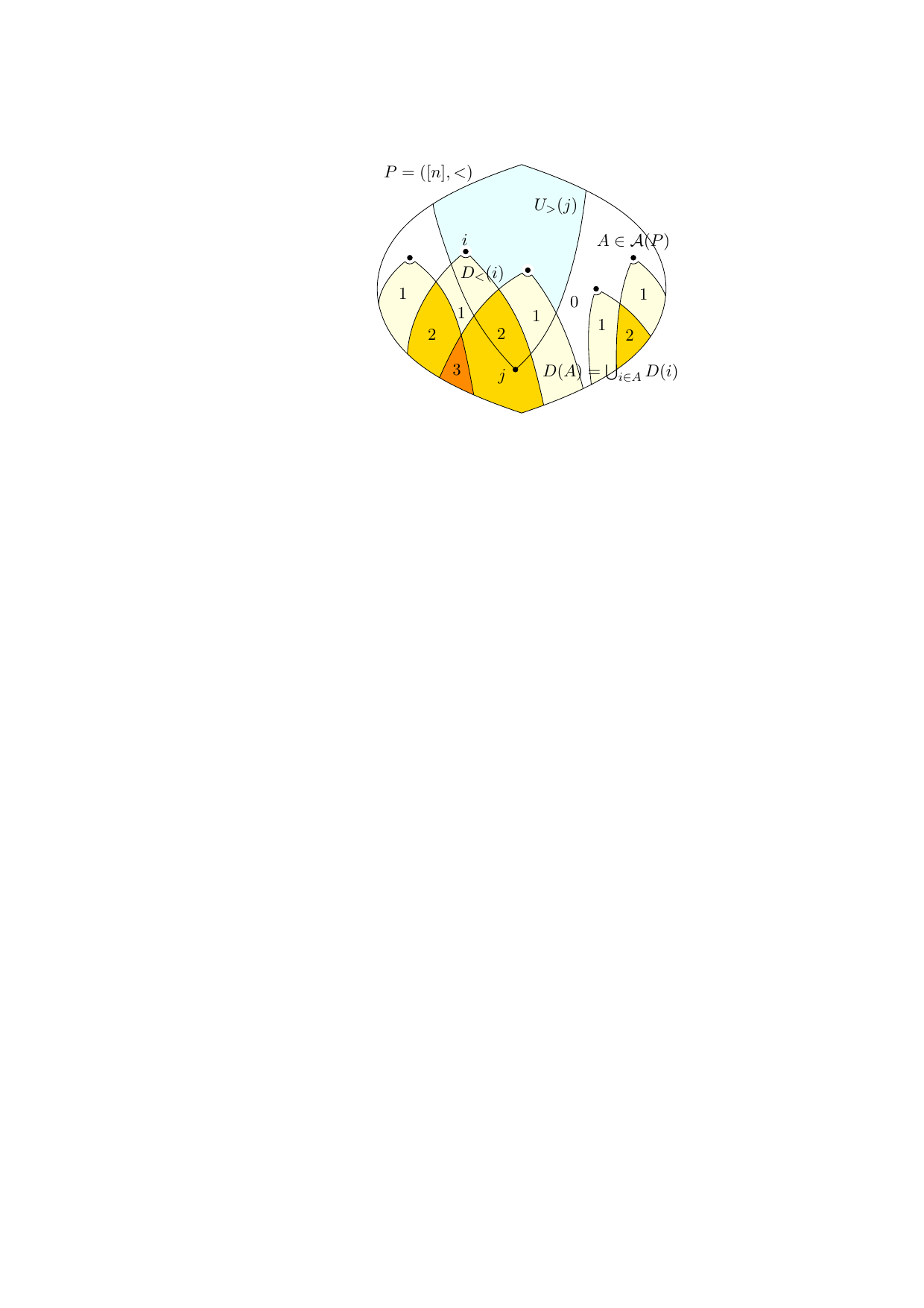}
\caption{An antichain~$A$ and its corresponding ideal~$D(A)$.
The numbers in the colored regions indicate the number of antichain elements in the upset of an element, which are stored in the variables~$s_i$ (cf.~\eqref{eq:si-poset}).}
\label{fig:antichain}
\end{wrapfigure}
A \defi{linear extension} of a poset~$P=([n],<_P)$ is a total order~$L=([n],<_L)$ that extends~$P$, hence such that if $i<_P j$ then~$i<_L j$.
We can think of a linear extension as a permutation on~$[n]$ such that if $i<_P j$ then the value~$i$ appears before the value~$j$ in the permutation.
We denote the identity permutation by $\id_n\coloneq (1,2,\ldots,n)$.

As mentioned in Section~\ref{sec:results}, our algorithms work for any labeling of the ground set.
In the context of posets, it is convenient to choose a labeling that is a linear extension.
Specifically, we refer to~$P=([n],<)$ as an \defi{upward} poset if it has the identity permutation~$\id_n=(1,2,\ldots,n)$ as a linear extension.
Clearly, any poset can be (re)labeled in this way, by numbering the elements from the bottom up.
With this labeling, the set of left neighbors~$N_<(i)$ and the set of right neighbors~$N_>(i)$ in the comparability graph simply become the proper downset~$D_<(i)$ and proper upset~$U_>(i)$, respectively.
Consequently, the entries of the array~$s$ defined in~\eqref{eq:si} have the interpretation
\begin{equation}
\label{eq:si-poset}
s_i=\sum\nolimits_{j\in U_>(i)} x_j=|U_>(i)\cap A|,
\end{equation}
where $A\in\cA(P)$ denotes the antichain whose characteristic vector is~$x$, i.e., $x=x^A$.
In words, $s_i$ counts the number of elements of the antichain~$A$ in the proper upset of~$i$, or equivalently, the number of elements of~$A$ that contain~$i$ in their proper downset; see Figure~\ref{fig:antichain}.
Furthermore, for an upward poset, a star exchange between two independent sets in the comparability graph corresponds to either removing an element from the antichain, or adding an element and removing its proper downset (the operation of removing the proper upset does not occur).
Thus, for two antichains~$A,B\in\cA(P)$, the transition $A\rightarrow B$ is called a \defi{star exchange} if either $B=A\setminus \{k\}$ for some~$k\in A$ or $B=(A\cup\{k\})\setminus D_<(k)$ for some $k\notin A$; see Figure~\ref{fig:star-poset}~(a).

\begin{figure}[t!]
\includegraphics[page=2]{antichain}
\caption{Star exchanges for (a) antichains and (b) ideals of posets.}
\label{fig:star-poset}
\end{figure}

\begin{algo}{Algorithm~A\sssm{}}{Antichains by star exchanges on shortest prefixes}
For an upward poset~$P=([n],<)$, this algorithm computes a Hamilton path on the chain polytope~$\scC(P)$, starting from an initial antichain~$\tx\in X(P)$.
\begin{enumerate}[label={\bfseries A\arabic*.}, leftmargin=8mm, noitemsep, topsep=3pt plus 3pt]
\item{} [Initialize] Set $x \gets \tx$, $s_i\gets \sum_{j\in U_>(i)}x_j$ for $i=1,\ldots,n$, and $b\gets 1^n$.
\item{} [Visit] Visit~$x$.
\item{} [Shortest prefix change] Compute $k\gets \min\{i\in[n]\mid b_i=1\wedge s_i=0\}$.
Terminate if $k=\infty$.
\item{} [Update $x$ and $s$] Set $x_k\gets \ol{x_k}$. \\
{\bf I40.} [Flipped $1\rightarrow 0$] If $x_k=0$, then for all $i\in D_<(k)$ set $s_i\gets s_i-1$. \\
{\bf I41.} [Flipped $0\rightarrow 1$] If $x_k=1$, then for all $i\in D_<(k)$ set $s_i\gets s_i+1$, and if $x_i=1$ set $x_i\gets 0$ and for all $j\in D_<(i)$ also set $s_j\gets s_j-1$.
\item{} [Update $b$] Set $b_k\gets 0$, and for $i=1,\ldots,k-1$ set $b_i\gets 1$, then goto~A2.
\end{enumerate}
\end{algo}

The algorithm obtained by translating Algorithm~I\sss{} to the poset setting is stated as Algorithm~A\sssm{}.
Of course, the more basic but slower variant analogous to Algorithm~I, without the array~$s$, also exists, but is omitted.

The initialization of~$s$ in step~A1 can be simplified to $s\gets 0^n$ if the initial antichain $\tx=0^n$ is used.

From Theorems~\ref{thm:algo-basic} and~\ref{thm:algo-faster} we thus derive the following result.

\begin{theorem}
\label{thm:antichain}
For every upward poset~$P=([n],<)$ and every initial antichain~$\tx\in X(P)$, Algorithm~A\sssm{} computes a genlex Hamilton path on the chain polytope~$\scC(P)$ starting at~$\tx$, where any two consecutive antichains differ in a star exchange.
The amortized delay is~$\cO(n)$.
\end{theorem}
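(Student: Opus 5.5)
The plan is to reduce Theorem~\ref{thm:antichain} to Theorems~\ref{thm:algo-basic} and~\ref{thm:algo-faster}, applied to the comparability graph~$G$ of~$P$ with the vertex labeling given by the upward poset.

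\textbf{Step 1: Algorithm~A\sssm{} is Algorithm~I\sss{} run on~$G$.}
Because the identity permutation~$\id_n$ is a linear extension of~$P$, every $j<_P i$ satisfies $j<i$ as integers. Hence in~$G$ we have $N_<(i)=D_<(i)$ and $N_>(i)=U_>(i)$ for every~$i\in[n]$. Substituting these sets into Algorithm~I\sss{} yields Algorithm~A\sssm{} line by line. The array~$s$ keeps exactly the meaning~\eqref{eq:si-poset}, which is just~\eqref{eq:si} with the sets renamed.

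\textbf{Step 2: correctness.}
Antichains of~$P$ are the independent sets of~$G$, so $X(P)=X(G)$ and $\scC(P)=\scP(G)$. By Theorem~\ref{thm:algo-faster}, Algorithm~I\sss{} produces the same listing as Algorithm~I. By Theorem~\ref{thm:algo-basic}, that listing is a genlex Hamilton path on~$\scP(G)=\scC(P)$ starting at~$\tx$, and consecutive independent sets differ in a star exchange in~$G$. It remains to match this with the poset notion of star exchange.
\begin{itemize}
\item A removal step $V=U\setminus\{k\}$ is a removal of an element from the antichain.
\item An insertion step has the form $V=(U\cup\{k\})\setminus N(k)$. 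The definition of~$k$ in step~A3 requires $s_k=0$, so no element of $U_>(k)$ lies in~$U$. Therefore $U\cap N(k)=U\cap D_<(k)$, and $V=(U\cup\{k\})\setminus D_<(k)$, which is exactly a poset star exchange.
\end{itemize}

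\textbf{Step 3: running time.}
The amortized delay~$\cO(n)$ follows from Theorem~\ref{thm:algo-faster}, because that bound holds regardless of the degrees of~$G$. Initialization costs $\cO(m)$, where $m$ is the number of comparable pairs. This is a one-time cost and does not enter the delay.

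\textbf{Expected obstacle.}
There is essentially none. The only point requiring care is Step~2, namely checking that the graph star exchange really removes only the proper downset and never part of the upset. This uses the invariant that step~A3 selects only indices~$k$ with $s_k=0$.
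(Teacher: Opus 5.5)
Your proposal is correct and follows the paper's route. The paper likewise obtains Theorem~\ref{thm:antichain} directly from Theorems~\ref{thm:algo-basic} and~\ref{thm:algo-faster}, using the identification $N_<(i)=D_<(i)$ and $N_>(i)=U_>(i)$ in the comparability graph of an upward poset; your explicit check that $s_k=0$ forces $U\cap N(k)=U\cap D_<(k)$, so that only the proper downset is ever removed, spells out a point the paper states only in passing.
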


Given a poset~$P=([n],<)$, an element~$i\in[n]$ is \defi{maximal} if there is no $j\in[n]$ with~$i<j$.
An \defi{ideal} is a downward closed set~$A\seq [n]$, hence such that if $i\in A$ and $j<i$ then we have $j\in A$.
We write $\cJ(P)$ for the set of all ideals of~$P$ and $Y(P)$ for the corresponding set of characteristic vectors, i.e., $Y(P)\coloneq \{x^A\mid A\in\cJ(P)\}$.
There is a one-to-one correspondence between antichains and ideals in~$P$, namely, the maximal elements of an ideal form an antichain, and conversely, the union of downsets of all elements in an antichain forms an ideal.
Given an ideal~$A\in\cJ(P)$, we write $\max A$ for the antichain determined by the maximal elements of~$A$.
Conversely, given an antichain~$A\in\cA(P)$, we write $D(A)\coloneq \bigcup_{i\in A}D(i)$ for the ideal generated by the antichain~$A$.

The \defi{order polytope} of~$P$ is defined as $\scO(P)\coloneq \conv(Y(P))=\{x^A\mid A\text{ ideal in }P\}$ \cite{MR824105}\footnote{Stanley considers upper ideals (upward closed), whereas we consider lower ideals (downward closed).
However, the two notions are equivalent by turning the poset upside down.}.
For two distinct ideals~$A,B\in\cJ(P)$, the pair of vertices~$\{x^A,x^B\}$ is an edge of the polytope~$\scO(P)$ if and only $P[A\triangle B]$ is a connected subposet~\cite{MR824105}.
Note that this condition implies~$A\subsetneq B$ or~$B\subsetneq A$.
Due to the one-to-one correspondence between antichains and ideals, both polytopes~$\scC(P)$ and~$\scO(P)$ have the same number of vertices.
In general, edges and higher-dimensional faces of both polytopes are different (see \cite[Ex.~1.2]{MR3612430}).
Nonetheless, edges on the order polytope coming from star exchanges do translate to edges on the chain polytope, as follows:
We say that two ideals~$A,B\in\cJ(P)$ differ in a \defi{star exchange} if the corresponding antichains~$\max A$ and~$\max B$ differ in a star exchange.
In other words, a transition $A\rightarrow B$ is a star exchange if either
\[ B=D(\max A\setminus \{k\})=A\setminus\big\{i\in D(k)\mid U(i)\cap\max A=\{k\}\big\} \]
for some $k\in \max A$, or $B=A\cup D(k)$; see Figure~\ref{fig:star-poset}~(b).

Using the aforementioned correspondence between antichains and ideals, we can thus modify Algorithm~A\sssm{} and compute the ideals corresponding to the antichains along the way.
The resulting Hamilton paths on the chain polytope are in one-to-correspondence to the Hamilton paths on the order polytope computed by Algorithm~A\sssm{}.

\begin{theorem}
\label{thm:ideal}
For every upward poset~$P=([n],<)$ and every initial ideal~$\ty\in Y(P)$, the modified Algorithm~\mbox{A\sssm{}} computes a genlex Hamilton path on the order polytope~$\scO(G)$ starting at~$\ty$, where any two consecutive ideals differ in a star exchange.
The amortized delay is~$\cO(n)$.
\end{theorem}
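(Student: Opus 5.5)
The plan is to transfer Theorem~\ref{thm:antichain} through the bijection $A\mapsto D(A)$ between antichains and ideals, and then make sure the ideal can be maintained within the $\cO(n)$ amortized budget. Given the initial ideal~$\ty$, we first compute the antichain $\tx\coloneq x^{\max \ty}$; an element $i\in\ty$ is maximal iff no $j\in U_>(i)$ lies in~$\ty$, so this takes $\cO(n^2)$ preprocessing time. We then run Algorithm~A\sssm{} from~$\tx$, and alongside~$x$ we maintain a bit vector~$y$ with $y=x^{D(A)}$ whenever $x=x^A$, and visit~$y$ instead of~$x$. By Theorem~\ref{thm:antichain} every antichain is visited exactly once, and $A\mapsto D(A)$ is a bijection $\cA(P)\to\cJ(P)$, so every ideal is visited exactly once, starting at $D(\max\ty)=\ty$. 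Consecutive ideals differ in a star exchange, because that notion is defined through the corresponding antichains. By the adjacency criterion of~\cite{MR824105}, each such transition is an edge of~$\scO(P)$: the difference $D(k)\setminus A$ in an adding step, and $\{i\in D(k)\mid U(i)\cap\max A=\{k\}\}$ in a removing step, both contain~$k$ and lie below~$k$, so the induced subposet is connected. The listing is therefore a Hamilton path on~$\scO(P)$.

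The main work is updating~$y$ and showing the genlex property. For the update, the key observation is that for an upward poset the array~$s$ already encodes the ideal: $i\in D(A)$ iff $x_i=1$ or $s_i>0$, by~\eqref{eq:si-poset}. So $y_i$ changes only when $x_i$ or $s_i$ changes, or when $s_i$ crosses between $0$ and positive. Every modification of $x$ and $s$ in steps~I40 and~I41 touches a single entry, and right after it we recompute $y_i=[x_i=1\vee s_i>0]$ in $\cO(1)$. Hence the total cost of maintaining~$y$ is proportional to the work of Algorithm~A\sssm{} itself, and the amortized $\cO(n)$ bound of Theorem~\ref{thm:antichain}, whose proof is the coin argument of Theorem~\ref{thm:algo-faster}, carries over unchanged.

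For genlex we argue by induction on~$n$ along the recursion~\eqref{eq:IG-rec} with $i=n$. Since $n$ is maximal in an upward poset, $n\in D(A)$ iff $n\in A$, so the $y$-listing is split by its last bit exactly as the $x$-listing is. Inside each block we need the induced listing to again be genlex. When $x_n=0$, the algorithm runs on $P-n$ and the ideals are those of $P-n$, so induction applies directly. When $x_n=1$, the algorithm runs on $P-D(n)$, and each ideal equals $D(n)\cup D(A')$ for an antichain $A'$ of $P-D(n)$. The coordinates in $D(n)$ are then constant, and the remaining coordinates form the ideal of~$A'$ in $P-D(n)$, relabeled monotonically, which is still upward. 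So induction applies to this block too.

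The delicate step I expect is exactly this second block: the ideal of $P-D(n)$ is not simply a suffix-restriction of the coordinates. One has to check that the positions of $D(n)$ being fixed and interleaved with the other positions does not break the suffix-consecutiveness. That holds because fixed coordinates never affect which strings share a suffix.
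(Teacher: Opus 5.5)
Your proposal is correct and takes essentially the paper's route: transfer Theorem~\ref{thm:antichain} through the bijection $A\mapsto D(A)$, and read off the ideal via $y_i=[x_i=1\vee s_i>0]$ using~\eqref{eq:si-poset}. The differences are minor. The paper recomputes $y$ from $x$ and $s$ in $\cO(n)$ time at each visit step instead of updating it entry by entry, and your explicit checks of the order-polytope edge condition and of the genlex property (your induction is sound) spell out what the paper leaves implicit in the correspondence.
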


\begin{proof}
We add an extra variable~$y\in\{0,1\}^n$ that stores the characteristic vector of the current ideal.
All computations are done with the corresponding antichain~$x\in X(P)$, and the variable~$y$ is only used in the initialization step and the visit step.
Specifically, we replace line~A1 by:

\vspace{1.5mm}
\hspace{-5mm}
\myboxm{{\bf A1.} [Initialize] For $i=1,\ldots,n$, set $x_i\gets 1$ if $\ty_i=1$ and $\sum_{j\in U_>(i)}\ty_j=0$, and $x_i\gets 0$ otherwise. Also set $s_i\gets \sum_{j\in U_>(i)}x_j$ for $i=1,\ldots,n$, and $b\gets 1^n$.}
\vspace{-1mm}

The initialization of~$x$ and~$s$ can be simplified to $x\gets 0^n$ and $s\gets 0^n$ if the initial ideal $\ty=0^n$ is used.

We observe that the ideal~$D(A)$ of an antichain~$A\in\cA(P)$ satisfies
\[ D(A)=\bigcup\nolimits_{i\in A}D(i)=A\cup\bigcup\nolimits_{i\in A}D_<(i)=A\cup\{i\in[n]\mid |U_>(i)\cap A|>0\}. \]
From~\eqref{eq:si-poset} we see that the quantity $|U_>(i)\cap A|$ for the current antichain~$A$, i.e., for $x\in X(P)$ with $x=x^A$, is stored in the variable~$s_i$; recall Figure~\ref{fig:antichain}.
Thus, we replace line~A2 by:

\vspace{1.5mm}
\hspace{-5mm}
\myboxm{{\bf A2.} [Visit] For $i=1,\ldots,n$, set $y_i\gets 1$ if $x_i=1$ or $s_i>0$, and $y_i\gets 0$ otherwise. Visit $y$.}
\vspace{-2mm}

These additional computations in the visit step take time~$\cO(n)$, so they do not affect the~$\cO(n)$ amortized delay bound.
\end{proof}

\section{Experimental comparison of Algorithms~I and~I\sss{}}
\label{sec:exp}

In this section, we aim to understand how well the bounds on the (amortized) delay established by Theorems~\ref{thm:algo-basic} and~\ref{thm:algo-faster} reflect the `running time in practice' when using Algorithms~I and~I\sss{} to compute independent sets for a variety of input graphs.
In particular, we aim to measure the effect of the vertex ordering of an input graph on the running time.
For this we implemented both algorithms in C++, incorporating the shortcut discussed after the proof of Theorem~\ref{thm:algo-basic}.

The results of the experiments are depicted in the four plots shown in Figure~\ref{fig:exp}.
The quantity~$D/n$ displayed on the vertical axes is defined as follows:
For each of the Algorithms~I (top row) or Algorithm~I\sss{} (bottom row) and a given input graph~$G=([n],E)$, we count the total number~$C$ of read and write accesses to the arrays~$x$ and~$b$, or $x$, $s$ and~$b$, respectively.
The visit step does not contribute to this count.
Thus, the number~$C$ is a proxy for the total work performed by the algorithm for computing the set~$\cI(G)$ of all independent sets of~$G$, which is independent of the choice of implementation language and computing power.
Then the average delay is simply
\begin{equation}
\label{eq:delay}
D\coloneq \frac{C}{|\cI(G)|},
\end{equation}
which is the work performed per independent set.
Thus, the quantity $D/n$ displayed in the figures is the average delay per bit of the vector~$x$ that encodes each independent set.

The horizontal axis in all plots shows the average degree~$\ol{d}(G)\coloneq 2m/n$, where $m=|E|$ is the number of edges of~$G$.
This quantity was chosen because it is independent of the vertex ordering, and yet roughly corresponds to the forward degeneracy~$d_<(G)$ and backward degeneracy~$d_>(G)$, because of the lower bounds $d_<(G),d_>(G)\geq \ol{d}(G)$, which are reasonably tight for graphs that are nearly regular.

The two plots in the left column are for sparse graphs with average degree in the interval~$[0,5]$, and the two plots in the right column are for dense graphs with average degree in the interval~$[5,115]$.
The ranges of values on the vertical axes are the same in the top and bottom row, to allow direct comparison between the algorithms, but they are drastically different in the left and right column, to provide a more fine-grained picture for sparse graphs.

The following graphs were chosen as input graphs for both algorithms.
\begin{itemize}[leftmargin=2ex,itemindent=0ex]
\item As in Section~\ref{sec:families} before, $E_n$ denotes the graph with $n$ vertices and no edges, and $P_n$ and~$C_n$ denote the path and cycle with $n$ vertices.
\item $Q_n$ denotes the $n$-dimensional hypercube with vertex set~$\{0,1\}^n$ and an edge between any two bitstrings that differ in exactly one bit.
\item $K_{a,b,c}$ denotes the complete tripartite graph with vertex sets of sizes~$a,b,c$.
\item $\Kg_{n,k}$ denotes the Kneser graph whose vertices are all $k$-element subsets of~$[n]$, with edges between disjoint sets.
\item $\Ci_{60,13}$ denotes the 60-vertex circulant graph with the 13-element set of displacement values~$\{1,3,5,6,7,8,9,13,15,17,18,19,20\}$.
\item $G_{n,p}$ denotes the binomial random graph with $n$ vertices and edge probability~$p$.
\end{itemize}

For each input graph~$G$, we consider the following five different orderings of its vertex set.
\begin{itemize}[leftmargin=2ex,itemindent=0ex]
\item \texttt{min}: vertex ordering obtained by repeatedly removing a vertex of minimum degree from~$G$; this is the ordering that minimizes the forward degeneracy~$d_>(G)$.
\item \texttt{revmin}: reverse of~\texttt{min}; this is the ordering that minimizes the backward degeneracy~$d_<(G)$.
\item \texttt{max}: vertex ordering obtained by repeatedly removing a vertex of maximum degree from~$G$.
\item \texttt{revmax}: reverse of~\texttt{max}.
\item \texttt{rand}: average of 10 uniformly random vertex orderings.
\end{itemize}

\subsection{Evaluation}

For the smallest and sparsest graphs, Algorithm~I has a small advantage, likely because it does not have to maintain the extra array~$s$, which is useless in this case, whereas for all other graphs, Algorithm~I\sss{} is faster, and the advantage of the extra array~$s$ pays off.
Most importantly, as the average degree grows, the average delay per bit remains constant for Algorithm~I\sss{}, as predicted by Theorem~\ref{thm:algo-faster}, but grows linearly for Algorithm~I, as predicted by Theorem~\ref{thm:algo-basic}.

\addtolength{\tabcolsep}{-1ex}
\begin{figure}[b!]
\centerline{
\begin{tabular}{ccc}
\rotatebox[origin=l]{90}{\hspace{25mm}Algorithm~I}
&
\begin{tikzpicture}
\begin{axis}[
    xlabel={$2m/n$=average degree},
    ylabel={$\frac{D}{n}$},
    ylabel style={rotate=-90,at={(0.05,0.9)}},
    xmin=0, xmax=5,
    ymin=0, ymax=3.5,
    xtick={0,0.5,1,1.5,2,2.5,3,3.5,4,4.5,5},
    ytick={0,0.5,1,1.5,2,2.5,3,3.5},
    ymajorgrids=true,
    xmajorgrids=true,
    grid style=dashed,
    width=8cm,
    height=6cm,
]
\addplot[color=red,mark=*,mark size=1.0pt]  coordinates {(0,1.39)} node[right] {$E_5$};
\addplot[color=red,mark=*,mark size=1.0pt]  coordinates {(0,0.7)} node[right] {$E_{10}$};
\addplot[color=red,mark=*,mark size=1.0pt]  coordinates {(0,0.28)} node[right] {$E_{25}$};
\addplot[color=blue,mark=*,mark size=1.0pt] coordinates {(1.6,1.95)}; 
\addplot[color=blue,mark=x,mark size=1.2pt] coordinates {(1.6,1.95)}; 
\addplot[color=blue,mark=+,mark size=1.2pt] coordinates {(1.6,2.26)} node[above] {$P_5$}; 
\addplot[color=blue,mark=o,mark size=1.2pt] coordinates {(1.6,2.03)}; 
\addplot[color=blue,mark=asterisk,mark size=1.2pt] coordinates {(1.6,2.14)}; 
\addplot[color=blue,mark=*,mark size=1.0pt] coordinates {(1.8,1.02)}; 
\addplot[color=blue,mark=x,mark size=1.2pt] coordinates {(1.8,1.02)}; 
\addplot[color=blue,mark=+,mark size=1.2pt] coordinates {(1.8,1.38)} node[above,xshift=-2mm] {$P_{10}$}; 
\addplot[color=blue,mark=o,mark size=1.2pt] coordinates {(1.8,1.17)}; 
\addplot[color=blue,mark=asterisk,mark size=1.2pt] coordinates {(1.8,1.22)}; 
\addplot[color=blue,mark=*,mark size=1.0pt] coordinates {(1.95,0.25)}; 
\addplot[color=blue,mark=x,mark size=1.2pt] coordinates {(1.95,0.25)}; 
\addplot[color=blue,mark=+,mark size=1.2pt] coordinates {(1.95,0.41)} node[left] {$P_{40}$}; 
\addplot[color=blue,mark=o,mark size=1.2pt] coordinates {(1.95,0.33)}; 
\addplot[color=blue,mark=asterisk,mark size=1.2pt] coordinates {(1.95,0.34)}; 
\addplot[color=teal,mark=*,mark size=1.0pt] coordinates {(2,2.24)}; 
\addplot[color=teal,mark=x,mark size=1.2pt] coordinates {(2,2.24)}; 
\addplot[color=teal,mark=+,mark size=1.2pt] coordinates {(2,2.35)} node[above] {$C_5$}; 
\addplot[color=teal,mark=o,mark size=1.2pt] coordinates {(2,2.36)}; 
\addplot[color=teal,mark=asterisk,mark size=1.2pt] coordinates {(2,2.28)}; 
\addplot[color=teal,mark=*,mark size=1.0pt] coordinates {(2,1.15)}; 
\addplot[color=teal,mark=x,mark size=1.2pt] coordinates {(2,1.15)}; 
\addplot[color=teal,mark=+,mark size=1.2pt] coordinates {(2,1.39)} node[above,xshift=2mm] {$C_{10}$}; 
\addplot[color=teal,mark=o,mark size=1.2pt] coordinates {(2,1.39)}; 
\addplot[color=teal,mark=asterisk,mark size=1.2pt] coordinates {(2,1.25)}; 
\addplot[color=teal,mark=*,mark size=1.0pt] coordinates {(2,0.29)}; 
\addplot[color=teal,mark=x,mark size=1.2pt] coordinates {(2,0.29)}; 
\addplot[color=teal,mark=+,mark size=1.2pt] coordinates {(2,0.39)} node[above,xshift=2mm] {$C_{40}$}; 
\addplot[color=teal,mark=o,mark size=1.2pt] coordinates {(2,0.39)}; 
\addplot[color=teal,mark=asterisk,mark size=1.2pt] coordinates {(2,0.35)}; 
\addplot[color=cyan,mark=*,mark size=1.0pt] coordinates {(3,1.68)}; 
\addplot[color=cyan,mark=x,mark size=1.2pt] coordinates {(3,1.68)}; 
\addplot[color=cyan,mark=+,mark size=1.2pt] coordinates {(3,1.96)} node[above] {$Q_3$}; 
\addplot[color=cyan,mark=o,mark size=1.2pt] coordinates {(3,1.96)}; 
\addplot[color=cyan,mark=asterisk,mark size=1.2pt] coordinates {(3,1.77)}; 
\addplot[color=cyan,mark=*,mark size=1.0pt] coordinates {(4,0.99)}; 
\addplot[color=cyan,mark=x,mark size=1.2pt] coordinates {(4,0.99)}; 
\addplot[color=cyan,mark=+,mark size=1.2pt] coordinates {(4,1.42)} node[above] {$Q_4$}; 
\addplot[color=cyan,mark=o,mark size=1.2pt] coordinates {(4,1.42)}; 
\addplot[color=cyan,mark=asterisk,mark size=1.2pt] coordinates {(4,1.08)}; 
\addplot[color=cyan,mark=*,mark size=1.0pt] coordinates {(5,0.56)}; 
\addplot[color=cyan,mark=x,mark size=1.2pt] coordinates {(5,0.56)}; 
\addplot[color=cyan,mark=+,mark size=1.2pt] coordinates {(5,1.05)} node[above,xshift=-2mm] {$Q_5$}; 
\addplot[color=cyan,mark=o,mark size=1.2pt] coordinates {(5,1.05)}; 
\addplot[color=cyan,mark=asterisk,mark size=1.2pt] coordinates {(5,0.65)}; 
\addplot[color=magenta,mark=*,mark size=1.0pt] coordinates {(3,1.43)} node[below] {${\rm Kg}_{5,2}$}; 
\addplot[color=magenta,mark=x,mark size=1.2pt] coordinates {(3,1.45)}; 
\addplot[color=magenta,mark=+,mark size=1.2pt] coordinates {(3,1.61)}; 
\addplot[color=magenta,mark=o,mark size=1.2pt] coordinates {(3,1.63)}; 
\addplot[color=magenta,mark=asterisk,mark size=1.2pt] coordinates {(3,1.52)}; 
\addplot[color=magenta,mark=*,mark size=1.0pt] coordinates {(4,0.5)} node[below,xshift=4mm] {${\rm Kg}_{7,3}$}; 
\addplot[color=magenta,mark=x,mark size=1.2pt] coordinates {(4,0.5)}; 
\addplot[color=magenta,mark=+,mark size=1.2pt] coordinates {(4,0.66)}; 
\addplot[color=magenta,mark=o,mark size=1.2pt] coordinates {(4,0.68)}; 
\addplot[color=magenta,mark=asterisk,mark size=1.2pt] coordinates {(4,0.58)}; 
\addplot[color=brown,mark=*,mark size=1.0pt] coordinates {(3.8,0.6)} node[below,xshift=-4mm,yshift=1mm] {$G_{20,0.2}$}; 
\addplot[color=brown,mark=x,mark size=1.2pt] coordinates {(3.8,1.22)}; 
\addplot[color=brown,mark=+,mark size=1.2pt] coordinates {(3.8,1.5)}; 
\addplot[color=brown,mark=o,mark size=1.2pt] coordinates {(3.8,0.68)}; 
\addplot[color=brown,mark=asterisk,mark size=1.2pt] coordinates {(3.8,0.83)}; 
\end{axis}
\end{tikzpicture}
&
\begin{tikzpicture}
\begin{axis}[
    xlabel={$2m/n$=average degree},
    ylabel={$\frac{D}{n}$},
    ylabel style={rotate=-90,at={(0.05,0.9)}},
    xmin=5, xmax=115,
    ymin=0, ymax=13,
    xtick={10,20,30,40,50,60,70,80,90,100,110},
    ytick={0,2,4,6,8,10,12},
    ymajorgrids=true,
    xmajorgrids=true,
    grid style=dashed,
    width=10cm,
    height=6cm,
]
\addplot[color=magenta,mark=*,mark size=1.0pt] coordinates {(10,0.55)}; 
\addplot[color=magenta,mark=x,mark size=1.2pt] coordinates {(10,0.56)}; 
\addplot[color=magenta,mark=+,mark size=1.2pt] coordinates {(10,0.76)} node[above,rotate=90,anchor=west,yshift=-1mm] {${\rm Kg}_{8,3}$}; 
\addplot[color=magenta,mark=o,mark size=1.2pt] coordinates {(10,0.94)}; 
\addplot[color=magenta,mark=asterisk,mark size=1.2pt] coordinates {(10,0.64)}; 
\addplot[color=blue,mark=*,mark size=1.0pt] coordinates {(7.83,1.9)}; 
\addplot[color=blue,mark=x,mark size=1.2pt] coordinates {(7.83,1.95)}; 
\addplot[color=blue,mark=+,mark size=1.2pt] coordinates {(7.83,3.13)} node[above,rotate=90,anchor=west] {$K_{3,4,5}$}; 
\addplot[color=blue,mark=o,mark size=1.2pt] coordinates {(7.83,2.11)}; 
\addplot[color=blue,mark=asterisk,mark size=1.2pt] coordinates {(7.83,2.11)}; 
\addplot[color=blue,mark=*,mark size=1.0pt] coordinates {(17.93,1.34)}; 
\addplot[color=blue,mark=x,mark size=1.2pt] coordinates {(17.93,1.37)}; 
\addplot[color=blue,mark=+,mark size=1.2pt] coordinates {(17.93,3.98)} node[above,rotate=90,anchor=west] {$K_{8,9,10}$}; 
\addplot[color=blue,mark=o,mark size=1.2pt] coordinates {(17.93,2.10)}; 
\addplot[color=blue,mark=asterisk,mark size=1.2pt] coordinates {(17.93,1.46)}; 
\addplot[color=blue,mark=*,mark size=1.0pt] coordinates {(28,1.11)}; 
\addplot[color=blue,mark=x,mark size=1.2pt] coordinates {(28,1.13)}; 
\addplot[color=blue,mark=+,mark size=1.2pt] coordinates {(28,4.89)} node[above,rotate=90,anchor=west] {$K_{13,14,15}$}; 
\addplot[color=blue,mark=o,mark size=1.2pt] coordinates {(28,2.24)}; 
\addplot[color=blue,mark=asterisk,mark size=1.2pt] coordinates {(28,1.28)}; 
\addplot[color=blue,mark=*,mark size=1.0pt] coordinates {(38,0.99)}; 
\addplot[color=blue,mark=x,mark size=1.2pt] coordinates {(38,1.00)}; 
\addplot[color=blue,mark=+,mark size=1.2pt] coordinates {(38,5.82)} node[above,rotate=90,anchor=west] {$K_{18,19,20}$}; 
\addplot[color=blue,mark=o,mark size=1.2pt] coordinates {(38,2.43)}; 
\addplot[color=blue,mark=asterisk,mark size=1.2pt] coordinates {(38,1.13)}; 
\addplot[color=blue,mark=*,mark size=1.0pt] coordinates {(48,0.92)}; 
\addplot[color=blue,mark=x,mark size=1.2pt] coordinates {(48,0.93)}; 
\addplot[color=blue,mark=+,mark size=1.2pt] coordinates {(48,6.76)} node[above,rotate=90,anchor=west,yshift=2mm] {$K_{23,24,25}$}; 
\addplot[color=blue,mark=o,mark size=1.2pt] coordinates {(48,2.64)}; 
\addplot[color=blue,mark=asterisk,mark size=1.2pt] coordinates {(48,1.02)}; 
\addplot[color=red,mark=*,mark size=1.0pt] coordinates {(26,1.54)}; 
\addplot[color=red,mark=x,mark size=1.2pt] coordinates {(26,1.57)}; 
\addplot[color=red,mark=+,mark size=1.2pt] coordinates {(26,3.16)} node[above,xshift=2mm,yshift=-1mm] {${\rm Ci}_{60,13}$}; 
\addplot[color=red,mark=o,mark size=1.2pt] coordinates {(26,3.21)}; 
\addplot[color=red,mark=asterisk,mark size=1.2pt] coordinates {(26,1.86)}; 
\addplot[color=brown,mark=*,mark size=1.0pt] coordinates {(19.6,1.35)} node[below,yshift=0.5mm] {$G_{50,0.5}$}; 
\addplot[color=brown,mark=x,mark size=1.2pt] coordinates {(19.6,2.53)}; 
\addplot[color=brown,mark=+,mark size=1.2pt] coordinates {(19.6,3.36)}; 
\addplot[color=brown,mark=o,mark size=1.2pt] coordinates {(19.6,1.95)}; 
\addplot[color=brown,mark=asterisk,mark size=1.2pt] coordinates {(19.6,2.23)}; 
\addplot[color=brown,mark=*,mark size=1.0pt] coordinates {(39.58,1.15)} node[below,yshift=1.5mm] {$G_{80,0.5}$}; 
\addplot[color=brown,mark=x,mark size=1.2pt] coordinates {(39.58,3.33)}; 
\addplot[color=brown,mark=+,mark size=1.2pt] coordinates {(39.58,5.69)}; 
\addplot[color=brown,mark=o,mark size=1.2pt] coordinates {(39.58,1.90)}; 
\addplot[color=brown,mark=asterisk,mark size=1.2pt] coordinates {(39.58,2.74)}; 
\addplot[color=brown,mark=*,mark size=1.0pt] coordinates {(59.92,1.98)} node[below,xshift=5mm] {$G_{120,0.5}$}; 
\addplot[color=brown,mark=x,mark size=1.2pt] coordinates {(59.92,4.09)}; 
\addplot[color=brown,mark=+,mark size=1.2pt] coordinates {(59.92,6.62)}; 
\addplot[color=brown,mark=o,mark size=1.2pt] coordinates {(59.92,2.76)}; 
\addplot[color=brown,mark=asterisk,mark size=1.2pt] coordinates {(59.92,3.46)}; 
\addplot[color=brown,mark=*,mark size=1.0pt] coordinates {(48.71,2.8)} node[below] {$G_{70,0.7}$}; 
\addplot[color=brown,mark=x,mark size=1.2pt] coordinates {(48.71,5.06)}; 
\addplot[color=brown,mark=+,mark size=1.2pt] coordinates {(48.71,7.66)}; 
\addplot[color=brown,mark=o,mark size=1.2pt] coordinates {(48.71,3.93)}; 
\addplot[color=brown,mark=asterisk,mark size=1.2pt] coordinates {(48.71,4.55)}; 
\addplot[color=brown,mark=*,mark size=1.0pt] coordinates {(70,3.28)} node[below,xshift=3mm] {$G_{100,0.7}$}; 
\addplot[color=brown,mark=x,mark size=1.2pt] coordinates {(70,6.67)}; 
\addplot[color=brown,mark=+,mark size=1.2pt] coordinates {(70,9.52)}; 
\addplot[color=brown,mark=o,mark size=1.2pt] coordinates {(70,4.29)}; 
\addplot[color=brown,mark=asterisk,mark size=1.2pt] coordinates {(70,5.59)}; 
\addplot[color=brown,mark=*,mark size=1.0pt] coordinates {(91.17,4.21)} node[below] {$G_{130,0.7}$}; 
\addplot[color=brown,mark=x,mark size=1.2pt] coordinates {(91.17,7.16)}; 
\addplot[color=brown,mark=+,mark size=1.2pt] coordinates {(91.17,11.17)}; 
\addplot[color=brown,mark=o,mark size=1.2pt] coordinates {(91.17,5.93)}; 
\addplot[color=brown,mark=asterisk,mark size=1.2pt] coordinates {(91.17,6.68)}; 
\addplot[color=brown,mark=*,mark size=1.0pt] coordinates {(112,4.47)} node[below,xshift=-4.5mm,yshift=1.5mm] {$G_{160,0.7}$}; 
\addplot[color=brown,mark=x,mark size=1.2pt] coordinates {(112,8.36)}; 
\addplot[color=brown,mark=+,mark size=1.2pt] coordinates {(112,12.49)}; 
\addplot[color=brown,mark=o,mark size=1.2pt] coordinates {(112,6.26)}; 
\addplot[color=brown,mark=asterisk,mark size=1.2pt] coordinates {(112,7.39)}; 
\end{axis}
\end{tikzpicture}
\\
\rotatebox[origin=l]{90}{\hspace{25mm}Algorithm~I\sss{}}
&
\begin{tikzpicture}
\begin{axis}[
    xlabel={$2m/n$=average degree},
    ylabel={$\frac{D}{n}$},
    ylabel style={rotate=-90,at={(0.05,0.9)}},
    xmin=0, xmax=5,
    ymin=0, ymax=3.5,
    xtick={0,0.5,1,1.5,2,2.5,3,3.5,4,4.5,5},
    ytick={0,0.5,1,1.5,2,2.5,3,3.5},
    ymajorgrids=true,
    xmajorgrids=true,
    grid style=dashed,
    width=8cm,
    height=6cm,
]
\addplot[color=red,mark=*,mark size=1.0pt]  coordinates {(0,1.9)} node[right] {$E_5$};
\addplot[color=red,mark=*,mark size=1.0pt]  coordinates {(0,0.91)} node[right] {$E_{10}$};
\addplot[color=red,mark=*,mark size=1.0pt]  coordinates {(0,0.36)} node[right] {$E_{25}$};
\addplot[color=blue,mark=*,mark size=1.0pt] coordinates {(1.6,2.68)}; 
\addplot[color=blue,mark=x,mark size=1.2pt] coordinates {(1.6,2.68)}; 
\addplot[color=blue,mark=+,mark size=1.2pt] coordinates {(1.6,2.86)} node[above] {$P_5$}; 
\addplot[color=blue,mark=o,mark size=1.2pt] coordinates {(1.6,2.62)}; 
\addplot[color=blue,mark=asterisk,mark size=1.2pt] coordinates {(1.6,2.76)}; 
\addplot[color=blue,mark=*,mark size=1.0pt] coordinates {(1.8,1.24)}; 
\addplot[color=blue,mark=x,mark size=1.2pt] coordinates {(1.8,1.24)}; 
\addplot[color=blue,mark=+,mark size=1.2pt] coordinates {(1.8,1.48)} node[above,xshift=-2mm] {$P_{10}$}; 
\addplot[color=blue,mark=o,mark size=1.2pt] coordinates {(1.8,1.30)}; 
\addplot[color=blue,mark=asterisk,mark size=1.2pt] coordinates {(1.8,1.37)}; 
\addplot[color=blue,mark=*,mark size=1.0pt] coordinates {(1.95,0.29)}; 
\addplot[color=blue,mark=x,mark size=1.2pt] coordinates {(1.95,0.29)}; 
\addplot[color=blue,mark=+,mark size=1.2pt] coordinates {(1.95,0.39)} node[left] {$P_{40}$}; 
\addplot[color=blue,mark=o,mark size=1.2pt] coordinates {(1.95,0.34)}; 
\addplot[color=blue,mark=asterisk,mark size=1.2pt] coordinates {(1.95,0.34)}; 
\addplot[color=teal,mark=*,mark size=1.0pt] coordinates {(2,2.98)}; 
\addplot[color=teal,mark=x,mark size=1.2pt] coordinates {(2,2.98)}; 
\addplot[color=teal,mark=+,mark size=1.2pt] coordinates {(2,2.96)} node[above] {$C_5$}; 
\addplot[color=teal,mark=o,mark size=1.2pt] coordinates {(2,3.00)}; 
\addplot[color=teal,mark=asterisk,mark size=1.2pt] coordinates {(2,2.97)}; 
\addplot[color=teal,mark=*,mark size=1.0pt] coordinates {(2,1.35)}; 
\addplot[color=teal,mark=x,mark size=1.2pt] coordinates {(2,1.35)}; 
\addplot[color=teal,mark=+,mark size=1.2pt] coordinates {(2,1.49)} node[above,xshift=2mm] {$C_{10}$}; 
\addplot[color=teal,mark=o,mark size=1.2pt] coordinates {(2,1.49)}; 
\addplot[color=teal,mark=asterisk,mark size=1.2pt] coordinates {(2,1.39)}; 
\addplot[color=teal,mark=*,mark size=1.0pt] coordinates {(2,0.32)}; 
\addplot[color=teal,mark=x,mark size=1.2pt] coordinates {(2,0.32)}; 
\addplot[color=teal,mark=+,mark size=1.2pt] coordinates {(2,0.38)} node[above,xshift=2mm] {$C_{40}$}; 
\addplot[color=teal,mark=o,mark size=1.2pt] coordinates {(2,0.38)}; 
\addplot[color=teal,mark=asterisk,mark size=1.2pt] coordinates {(2,0.34)}; 
\addplot[color=cyan,mark=*,mark size=1.0pt] coordinates {(3,1.96)}; 
\addplot[color=cyan,mark=x,mark size=1.2pt] coordinates {(3,1.96)}; 
\addplot[color=cyan,mark=+,mark size=1.2pt] coordinates {(3,2.20)} node[above] {$Q_3$}; 
\addplot[color=cyan,mark=o,mark size=1.2pt] coordinates {(3,2.20)}; 
\addplot[color=cyan,mark=asterisk,mark size=1.2pt] coordinates {(3,2.00)}; 
\addplot[color=cyan,mark=*,mark size=1.0pt] coordinates {(4,0.96)}; 
\addplot[color=cyan,mark=x,mark size=1.2pt] coordinates {(4,0.96)}; 
\addplot[color=cyan,mark=+,mark size=1.2pt] coordinates {(4,1.32)} node[above] {$Q_4$}; 
\addplot[color=cyan,mark=o,mark size=1.2pt] coordinates {(4,1.32)}; 
\addplot[color=cyan,mark=asterisk,mark size=1.2pt] coordinates {(4,1.00)}; 
\addplot[color=cyan,mark=*,mark size=1.0pt] coordinates {(5,0.50)}; 
\addplot[color=cyan,mark=x,mark size=1.2pt] coordinates {(5,0.50)}; 
\addplot[color=cyan,mark=+,mark size=1.2pt] coordinates {(5,0.87)} node[above,xshift=-2mm] {$Q_5$}; 
\addplot[color=cyan,mark=o,mark size=1.2pt] coordinates {(5,0.87)}; 
\addplot[color=cyan,mark=asterisk,mark size=1.2pt] coordinates {(5,0.52)}; 
\addplot[color=magenta,mark=*,mark size=1.0pt] coordinates {(3,1.58)} node[below] {${\rm Kg}_{5,2}$}; 
\addplot[color=magenta,mark=x,mark size=1.2pt] coordinates {(3,1.59)}; 
\addplot[color=magenta,mark=+,mark size=1.2pt] coordinates {(3,1.66)}; 
\addplot[color=magenta,mark=o,mark size=1.2pt] coordinates {(3,1.69)}; 
\addplot[color=magenta,mark=asterisk,mark size=1.2pt] coordinates {(3,1.63)}; 
\addplot[color=magenta,mark=*,mark size=1.0pt] coordinates {(4,0.45)} node[below,xshift=4mm] {${\rm Kg}_{7,3}$}; 
\addplot[color=magenta,mark=x,mark size=1.2pt] coordinates {(4,0.45)}; 
\addplot[color=magenta,mark=+,mark size=1.2pt] coordinates {(4,0.53)}; 
\addplot[color=magenta,mark=o,mark size=1.2pt] coordinates {(4,0.54)}; 
\addplot[color=magenta,mark=asterisk,mark size=1.2pt] coordinates {(4,0.48)}; 
\addplot[color=brown,mark=*,mark size=1.0pt] coordinates {(3.8,0.62)} node[below,xshift=-4mm,yshift=1mm] {$G_{20,0.2}$}; 
\addplot[color=brown,mark=x,mark size=1.2pt] coordinates {(3.8,1.03)}; 
\addplot[color=brown,mark=+,mark size=1.2pt] coordinates {(3.8,1.17)}; 
\addplot[color=brown,mark=o,mark size=1.2pt] coordinates {(3.8,0.64)}; 
\addplot[color=brown,mark=asterisk,mark size=1.2pt] coordinates {(3.8,0.75)}; 
\end{axis}
\end{tikzpicture}
&
\begin{tikzpicture}
\begin{axis}[
    legend style={font=\small},
    legend pos=north east,
    legend cell align={left},
    xlabel={$2m/n$=average degree},
    ylabel={$\frac{D}{n}$},
    ylabel style={rotate=-90,at={(0.05,0.9)}},
    xmin=5, xmax=115,
    ymin=0, ymax=13,
    xtick={10,20,30,40,50,60,70,80,90,100,110},
    ytick={0,2,4,6,8,10,12},
    ymajorgrids=true,
    xmajorgrids=true,
    grid style=dashed,
    width=10cm,
    height=6cm,
]
\addlegendimage{mark=*,black,only marks,mark size=1.5}
\addlegendimage{mark=x,black,only marks,mark size=1.5}
\addlegendimage{mark=+,black,only marks,mark size=1.5}
\addlegendimage{mark=o,black,only marks,mark size=1.5}
\addlegendimage{mark=asterisk,black,only marks,mark size=1.5}
\legend{\hspace{2mm}\texttt{min},\hspace{2mm}\texttt{revmin},\hspace{2mm}\texttt{max},\hspace{2mm}\texttt{revmax},\hspace{2mm}\texttt{rand}}
\addplot[color=magenta,mark=*,mark size=1.0pt] coordinates {(10,0.36)}; 
\addplot[color=magenta,mark=x,mark size=1.2pt] coordinates {(10,0.36)}; 
\addplot[color=magenta,mark=+,mark size=1.2pt] coordinates {(10,0.43)} node[above,rotate=90,anchor=west,yshift=-1mm] {${\rm Kg}_{8,3}$}; 
\addplot[color=magenta,mark=o,mark size=1.2pt] coordinates {(10,0.53)}; 
\addplot[color=magenta,mark=asterisk,mark size=1.2pt] coordinates {(10,0.38)}; 
\addplot[color=blue,mark=*,mark size=1.0pt] coordinates {(7.83,1.79)}; 
\addplot[color=blue,mark=x,mark size=1.2pt] coordinates {(7.83,1.84)}; 
\addplot[color=blue,mark=+,mark size=1.2pt] coordinates {(7.83,2.73)} node[above,rotate=90,anchor=west] {$K_{3,4,5}$}; 
\addplot[color=blue,mark=o,mark size=1.2pt] coordinates {(7.83,1.87)}; 
\addplot[color=blue,mark=asterisk,mark size=1.2pt] coordinates {(7.83,1.91)}; 
\addplot[color=blue,mark=*,mark size=1.0pt] coordinates {(17.93,0.76)}; 
\addplot[color=blue,mark=x,mark size=1.2pt] coordinates {(17.93,0.78)}; 
\addplot[color=blue,mark=+,mark size=1.2pt] coordinates {(17.93,2.34)} node[above,rotate=90,anchor=west] {$K_{8,9,10}$}; 
\addplot[color=blue,mark=o,mark size=1.2pt] coordinates {(17.93,1.27)}; 
\addplot[color=blue,mark=asterisk,mark size=1.2pt] coordinates {(17.93,0.86)}; 
\addplot[color=blue,mark=*,mark size=1.0pt] coordinates {(28,0.48)}; 
\addplot[color=blue,mark=x,mark size=1.2pt] coordinates {(28,0.49)}; 
\addplot[color=blue,mark=+,mark size=1.2pt] coordinates {(28,2.27)} node[above,rotate=90,anchor=west] {$K_{13,14,15}$}; 
\addplot[color=blue,mark=o,mark size=1.2pt] coordinates {(28,1.12)}; 
\addplot[color=blue,mark=asterisk,mark size=1.2pt] coordinates {(28,0.62)}; 
\addplot[color=blue,mark=*,mark size=1.0pt] coordinates {(38,0.35)}; 
\addplot[color=blue,mark=x,mark size=1.2pt] coordinates {(38,0.36)}; 
\addplot[color=blue,mark=+,mark size=1.2pt] coordinates {(38,2.23)} node[above,rotate=90,anchor=west] {$K_{18,19,20}$}; 
\addplot[color=blue,mark=o,mark size=1.2pt] coordinates {(38,1.05)}; 
\addplot[color=blue,mark=asterisk,mark size=1.2pt] coordinates {(38,0.46)}; 
\addplot[color=blue,mark=*,mark size=1.0pt] coordinates {(48,0.28)}; 
\addplot[color=blue,mark=x,mark size=1.2pt] coordinates {(48,0.29)}; 
\addplot[color=blue,mark=+,mark size=1.2pt] coordinates {(48,2.21)} node[above,rotate=90,anchor=west] {$K_{23,24,25}$}; 
\addplot[color=blue,mark=o,mark size=1.2pt] coordinates {(48,1.01)}; 
\addplot[color=blue,mark=asterisk,mark size=1.2pt] coordinates {(48,0.35)}; 
\addplot[color=red,mark=*,mark size=1.0pt] coordinates {(26,0.65)}; 
\addplot[color=red,mark=x,mark size=1.2pt] coordinates {(26,0.66)}; 
\addplot[color=red,mark=+,mark size=1.2pt] coordinates {(26,1.10)} node[above,xshift=4mm,yshift=-1mm] {${\rm Ci}_{60,13}$}; 
\addplot[color=red,mark=o,mark size=1.2pt] coordinates {(26,1.12)}; 
\addplot[color=red,mark=asterisk,mark size=1.2pt] coordinates {(26,0.7)}; 
\addplot[color=brown,mark=*,mark size=1.0pt] coordinates {(19.6,0.71)}; 
\addplot[color=brown,mark=x,mark size=1.2pt] coordinates {(19.6,1.19)}; 
\addplot[color=brown,mark=+,mark size=1.2pt] coordinates {(19.6,1.39)}  node[above,rotate=90,anchor=west,yshift=-2.5mm] {$G_{50,0.5}$}; 
\addplot[color=brown,mark=o,mark size=1.2pt] coordinates {(19.6,0.90)}; 
\addplot[color=brown,mark=asterisk,mark size=1.2pt] coordinates {(19.6,1.03)}; 
\addplot[color=brown,mark=*,mark size=1.0pt] coordinates {(39.58,0.48)}; 
\addplot[color=brown,mark=x,mark size=1.2pt] coordinates {(39.58,0.94)}; 
\addplot[color=brown,mark=+,mark size=1.2pt] coordinates {(39.58,1.34)} node[above,rotate=90,anchor=west,yshift=-2.5mm,xshift=-2mm] {$G_{80,0.5}$}; 
\addplot[color=brown,mark=o,mark size=1.2pt] coordinates {(39.58,0.56)}; 
\addplot[color=brown,mark=asterisk,mark size=1.2pt] coordinates {(39.58,0.76)}; 
\addplot[color=brown,mark=*,mark size=1.0pt] coordinates {(59.92,0.46)}; 
\addplot[color=brown,mark=x,mark size=1.2pt] coordinates {(59.92,0.84)}; 
\addplot[color=brown,mark=+,mark size=1.2pt] coordinates {(59.92,1.15)} node[above,rotate=90,anchor=west] {$G_{120,0.5}$}; 
\addplot[color=brown,mark=o,mark size=1.2pt] coordinates {(59.92,0.59)}; 
\addplot[color=brown,mark=asterisk,mark size=1.2pt] coordinates {(59.92,0.71)}; 
\addplot[color=brown,mark=*,mark size=1.0pt] coordinates {(48.71,0.9)}; 
\addplot[color=brown,mark=x,mark size=1.2pt] coordinates {(48.71,1.46)}; 
\addplot[color=brown,mark=+,mark size=1.2pt] coordinates {(48.71,1.77)} node[above,rotate=90,anchor=west,yshift=-3mm,xshift=-2mm] {$G_{70,0.7}$}; 
\addplot[color=brown,mark=o,mark size=1.2pt] coordinates {(48.71,1.07)}; 
\addplot[color=brown,mark=asterisk,mark size=1.2pt] coordinates {(48.71,1.26)}; 
\addplot[color=brown,mark=*,mark size=1.0pt] coordinates {(70,0.79)}; 
\addplot[color=brown,mark=x,mark size=1.2pt] coordinates {(70,1.41)}; 
\addplot[color=brown,mark=+,mark size=1.2pt] coordinates {(70,1.67)} node[above,rotate=90,anchor=west] {$G_{100,0.7}$}; 
\addplot[color=brown,mark=o,mark size=1.2pt] coordinates {(70,0.91)}; 
\addplot[color=brown,mark=asterisk,mark size=1.2pt] coordinates {(70,1.16)}; 
\addplot[color=brown,mark=*,mark size=1.0pt] coordinates {(91.17,0.82)}; 
\addplot[color=brown,mark=x,mark size=1.2pt] coordinates {(91.17,1.26)}; 
\addplot[color=brown,mark=+,mark size=1.2pt] coordinates {(91.17,1.57)} node[above,rotate=90,anchor=west] {$G_{130,0.7}$}; 
\addplot[color=brown,mark=o,mark size=1.2pt] coordinates {(91.17,0.98)}; 
\addplot[color=brown,mark=asterisk,mark size=1.2pt] coordinates {(91.17,1.12)}; 
\addplot[color=brown,mark=*,mark size=1.0pt] coordinates {(112,0.74)}; 
\addplot[color=brown,mark=x,mark size=1.2pt] coordinates {(112,1.20)}; 
\addplot[color=brown,mark=+,mark size=1.2pt] coordinates {(112,1.49)} node[above,rotate=90,anchor=west] {$G_{160,0.7}$}; 
\addplot[color=brown,mark=o,mark size=1.2pt] coordinates {(112,0.90)}; 
\addplot[color=brown,mark=asterisk,mark size=1.2pt] coordinates {(112,1.06)}; 
\end{axis}
\end{tikzpicture}
\end{tabular}
}
\caption{Experimental comparison of Algorithm~I (top row) and Algorithm~I\sss{} (bottom row) on a variety of input graphs.
See~\eqref{eq:delay} for the definition of the quantity plotted on the vertical axes.
The legend at the bottom right applies to all four plots, showing the marker styles used for the five different vertex orderings.}
\label{fig:exp}
\end{figure}

As can be seen for empty graphs with 5, 10 and 25 vertices, paths and cycles with 5, 10 and 40 vertices, and hypercubes of dimension 3, 4 and 5, the average delay per bit decreases considerably as the number of vertices increases.
To understand this, it helps to consider the family of empty graphs~$E_n$, and the correspondence to the binary reflected Gray code discussed in Section~\ref{sec:families} before.
In this case, the worst-case delay bound of Theorem~\ref{thm:algo-basic} and the average delay bound of Theorem~\ref{thm:algo-faster}, which are both linear in~$n$, are overly pessimistic, as the average delay is in fact constant.
Specificially, for $E_n$ the average delay is given by the average length of the prefixes that change in the binary reflected Gray code, which can be shown to be $2-o(1)<2$.
Consequently, when dividing by~$n$, we get a quantity that vanishes with~$n$.
More generally, for very sparse graphs that have a lot of independent sets, most changes occur on very short prefixes of~$x$, leading to a sublinear average delay for both Algorithm~I and Algorithm~I\sss{}, and thus the quantity $D/n$ decreases as $n$ grows.

The effect of vertex orderings can be seen most clearly for Algorithm~I when running it on the relative dense complete tripartite graphs and random graphs.
Namely, the algorithm is fastest for the \texttt{min} ordering, as expected and predicted by Theorem~\ref{thm:algo-basic}.
The \texttt{revmax} ordering comes second, as it places large degree vertices at the end, which consequently leaves small degree vertices in the beginning (this is what matters most).
The \texttt{rand} ordering and \texttt{revmin} ordering come next.
The \texttt{max} ordering is clearly worst, as it places large degree vertices at the beginning.
The difference in running time between the best and worst ordering is substantial and grows with the average degree, emphasizing the necessity for choosing a good ordering.

A more fine-grained comparison of the \texttt{min} and \texttt{revmin} ordering for Algorithm~I\sss{} shows the following.
While Theorem~\ref{thm:algo-faster} suggests that the delay of Algorithm~I\sss{} depends primarily on the backward degeneracy~$d_<(G)$, which is minimized by the \texttt{revmin} ordering, experiments with more irregular graphs, such as~$K_{a,b,c}$ where $a<b<c$ are such that $b-a,c-b\gg 1$, show that actually the \texttt{min} ordering is by a factor of approximately~2 faster.
This somewhat counterintuitive behavior can be explained as follows:
While the forward and backward degeneracy minimize the maximum degree in either forward or backward direction, what is more important for the running time of the algorithms is that vertices with small labels have small degrees.
These are ones for which the neighborhoods are examined most frequently, typically exponentially more often than for vertices with large labels, as a consequence of the genlex ordering of the independent sets.
Therefore, while the \texttt{min} ordering places vertices with small forward degree first, this also seems to yield relatively small backward degrees in the beginning.
On the other hand, the \texttt{revmin} ordering places vertices with small backward degrees last, leaving vertices with potentially larger degrees in the beginning.
Therefore, in practice it might be best to always use the \texttt{min} ordering for both algorithms.

\bibliographystyle{alpha}
\bibliography{refs}

\end{document}